\newcommand{\SHORT}[1]{}
\newcommand{\IGNORE}[1]{}
\renewcommand{\log}{\lg}
\newcommand{\Patrascu}{P\v{a}tra\c{s}cu}
\newenvironment{description*}%
  {\vspace{-1ex}\begin{description}%
    \setlength{\itemsep}{-0.5ex}%
    \setlength{\parsep}{0pt}}%
  {\end{description}}
\newenvironment{itemize*}%
  {\vspace{-1ex}\begin{itemize}%
    \setlength{\itemsep}{-0.5ex}%
    \setlength{\parsep}{0pt}}%
  {\end{itemize}}
\newenvironment{enumerate*}%
  {\vspace{-1ex}\begin{enumerate}%
    \setlength{\itemsep}{-0.5ex}%
    \setlength{\parsep}{0pt}}%
  {\end{enumerate}}
\begin{document}

\title{Succinct Indices for Range Queries with applications to Orthogonal Range Maxima\thanks{Work done while Farzan was employed by, and Raman was visiting, MPI.}}
\author{Arash Farzan\inst{1}
\and J. Ian Munro\inst{2}
\and Rajeev Raman\inst{3}}

\institute{Max-Planck-Institut f\"ur  Informatik, Saarb\"ucken, Germany.\and
University of Waterloo, Canada.\and
University of Leicester, UK.}

\maketitle

\begin{abstract}
We consider the problem of preprocessing $N$ points in 2D, each endowed with a priority, to answer
the following queries: given a axis-parallel rectangle, determine the point with the largest
priority in the rectangle.  Using the ideas of the \emph{effective entropy} of range maxima queries
and \emph{succinct indices} for range maxima queries, we obtain a
structure that uses $O(N)$ words and 
answers the above query in $O(\log N \log \log N)$ time.  This is a 
direct improvement of Chazelle's result from 1985 \cite{Chazelle88} for this problem -- 
Chazelle required $O(N/\epsilon)$ words to answer queries in $O((\log N)^{1+\epsilon})$ time for
any constant $\epsilon > 0$. 
\end{abstract}


\section{Introduction}
\label{sec:intro}

Range searching is one of the most fundamental problems in computer science with important
applications in areas such as computational geometry, databases and string processing.
The input is a set of $N$ points in general position in $\mathbb{R}^d$ (we focus on the case $d=2$),   
where each point is associated with \emph{satellite} data, and an aggregation function
defined on the satellite data.  We wish to preprocess the input
to answer queries of the following form efficiently: given any 2D axis-aligned rectangle $R$,
return the value of the aggregation function on the satellite data of  
all points in $R$. 
Researchers have considered range searching 
with respect to diverse aggregation functions such as  emptiness checking, counting, reporting,
minimum/maximum, etc. \cite{encyclopaedia}. 
In this paper, we consider the problem of \emph{range maximum} 
searching (the minimum variant is symmetric), where the satellite data associated with
each point is a numerical \emph{priority}, and the aggregation function
is ``arg max'', i.e.,  we want to report the point with the maximum priority in the given
query rectangle. This aggregation function is \emph{the} canonical one to study, among
those of the ``commutative semi-group'' class \cite{Gabow1984,Chazelle88}.

Our primary concern is the space requirement of the data structure ---  
we aim for \emph{linear-space} data structures, namely those
that occupy $O(N)$ words --- and seek to minimize query time subject 
to this constraint.    The space usage is a fundamental concern
in geometric data structures due to very large data volumes; indeed, space usage is a main
reason why range searching data structures like {quadtrees}, which have poor
worst-case query performance, are preferred in many
practical applications over data structures such as {range trees}, which have
asymptotically optimal query performance.   Space efficient solutions 
to range searching date to the work of Chazelle \cite{Chazelle88} over a quarter century ago, and
Nekrich \cite{Nekrich} gives a nice survey of much of this work.
Recently there has been a flurry of activity on various aspects of space-efficient
range reporting, and for some aggregation functions there has even been
attention given to the constant term within the space usage \cite{BoseHMM09,Nekrich}.

We now formalize the problem studied by our paper, as well as those of
\cite{Chazelle88,Chan10,Karpinski09}. 
We assume input points are in \emph{rank space}: 
the $x$-coordinates of the $n$ points are $\{0,\ldots,N-1\} = [N]$,
and the $y$-coordinates are given by a permutation
$\upsilon: [N] \rightarrow [N]$, such that the points are
$(i,\upsilon(i))$ for $i = 0,\ldots,N-1$.  The priorities of
the points are given by another permutation $\pi$ such
that $\pi(i)$ is the priority of the point $(i, \upsilon(i))$.
The reduction to rank space  can be performed in $O(\log N)$ 
time with a linear space structure even if the original and query points 
are points in $\mathbb{R}^2$ \cite{Gabow1984,Chazelle88}.  The query rectangle
is specifed by two points from $[N] \times [N]$ and includes
the boundaries (see Fig.~\ref{fig:examples}(R)).  
Analogous to previous work, we also assume 
the word-RAM model with word size $\thetah{\log N}$ bits\footnote{$\log x = \mbox{\rm log}_2 x$}.
%
\begin{figure}
\centerline{\includegraphics[scale=0.33]{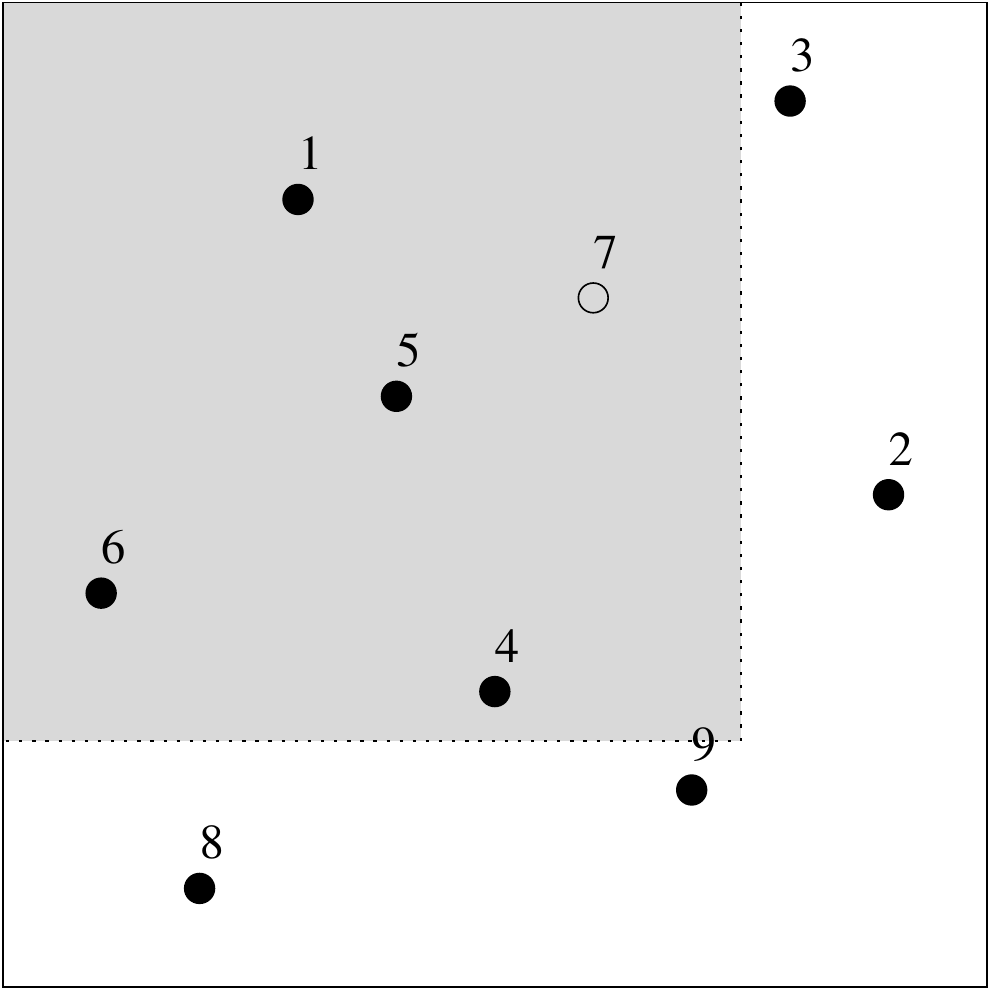}~~~~~\includegraphics[scale=0.33]{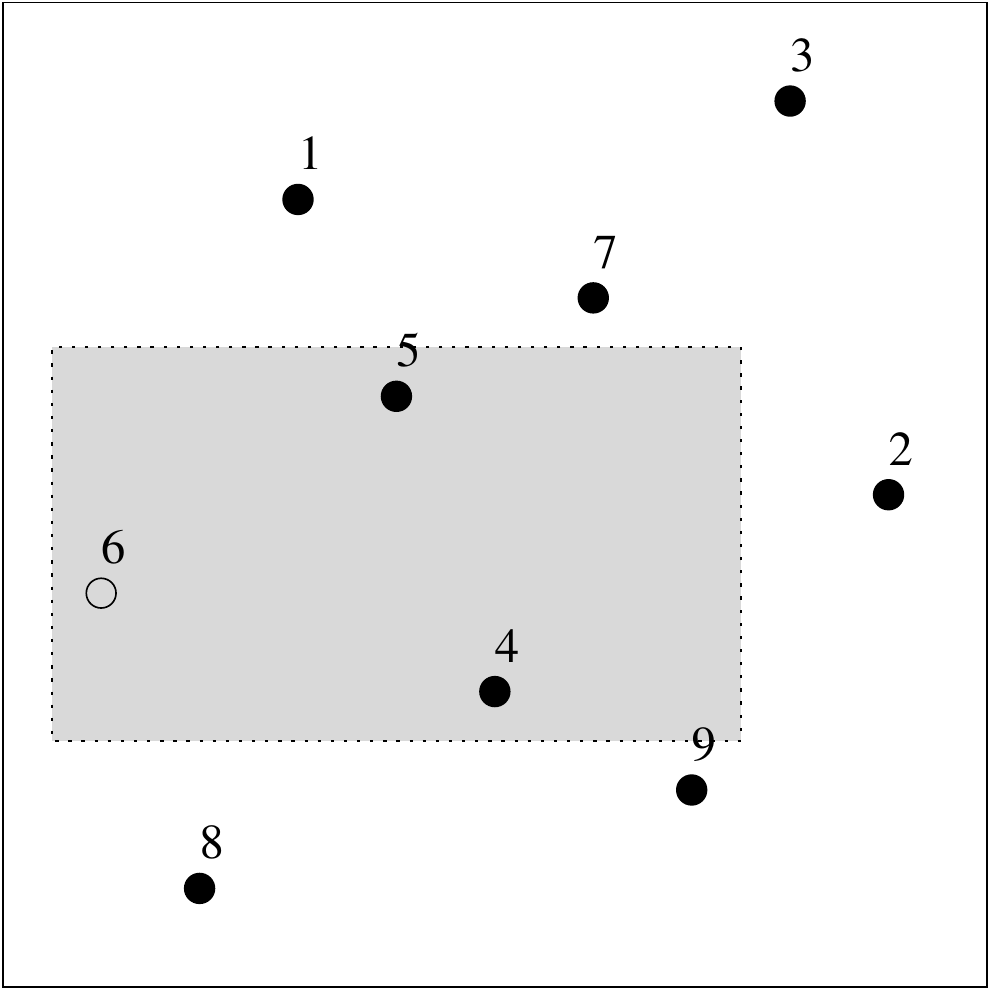}}
\caption{2-sided and 4-sided range maximum queries.  The numbers with the points represent
their priorities, and the unshaded points are the answers.}
\label{fig:examples}
\end{figure}

Range maximum searching is a well-studied problem (see Table~\ref{tab:prevWork}).
Chazelle~\cite{Chazelle88} gave a few space/time tradeoffs covering a broad spectrum. 
To the best of our knowledge, the solution with the lowest query time that
uses only $O(N)$ words is still that of Chazelle~\cite{Chazelle88}, who gave a query time that is
polylogarithmic in $N$. More precisely, he gave a data structure 
of size  $O(\frac{1}{\epsilon} N)$ words with query time
$O(\log^{1+\epsilon} N)$ for any fixed $\epsilon > 0$.  
Other recent results on the range maximum problem are as follows.
Karpinski~\etalcite{Karpinski09} studied the problem of 3D five-sided range emptiness
queries which is closely related to range maximum searching in 2D. As observed
in~\cite{Chan10}, their solution yields a query time of $(\log \log N)^{O(1)}$
with an index of size
$N (\log\log N)^{O(1)}$ words. Chan~\etalcite{Chan10} currently give the best query time of
$\oh{\log\log N}$, but this is at the expense of using $\oh{N\log^{\epsilon} N}$ words,
for any fixed $\epsilon > 0$. However, 
there has been no improvement in the running time for linear-space data structures.
In this paper, we improve Chazelle's long-standing result by giving a data structure of $O(N)$ 
words and  reducing the query time from polylogarithmic to 
``almost" logarithmic, namely, $\oh{\log N \log \log N}$.
\begin{table}[t]
\small
\begin{center}
\begin{tabular}{|l|c|c|}\hline
Citation & Size (in words) & Query time \\ \hline \hline
Chazelle'88~\cite{Chazelle88} & $\oh{N\log^{\epsilon}N}$ & $\oh{\log N}$ \\ \hline
Chan et al.'10~\cite{Chan10} & $\oh{N\log^{\epsilon}N}$ & $\oh{\log\log N}$ \\ \hline
Karpinski et al.'09~\cite{Karpinski09} & $\oh{N(\log\log N)^{O(1)}}$ & $\oh{(\log\log N)^2}$ \\ \hline
Chazelle'88~\cite{Chazelle88} & $\oh{N\log\log N}$ & $\oh{\log N\log\log N}$ \\ \hline
Chazelle'88~\cite{Chazelle88} & $\oh{\frac{1}{\epsilon}N}$ & $\oh{\log^{1+\epsilon} N}$ \\ \hline
\textbf{NEW} & $\oh{N} $ & $\oh{\log N\, \log\log N }$ \\\hline
\end{tabular}
\end{center}
\caption{Space/time tradeoffs for 2D range maximum searching in the word RAM.  \label{tab:prevWork}}
\end{table}
Although our primary focus is on 4-sided queries,
which specify a rectangle that is bounded from
all sides,  we also
need to consider 2-sided and 3-sided queries, which are ``open'' on two and one
side respectively (thus a 2-sided query is specified by a single point $(i,j)$---see Fig.~\ref{fig:examples}(L)---and
a 3-sided query by two points $(i,j)$ and $(k,l)$ where either $i=k$ or $j=l$).  
Our solution recursively divides the points into horizontal and 
vertical slabs, and a query rectangle is decomposed into smaller  2-sided, 3-sided,
and 4-sided queries. A key intermediate result 
is the data structure for 2-sided queries.  The 2-sided sub-problems 
are partitioned into smaller sub-problems, which are stored
in a ``compressed'' format that is then ``decompressed'' on demand.
The ``compression'' uses the idea that to   
answer 2-sided range maxima queries on a problem of size $m$, 
one need not store the entire total order of priorities using 
$\thetah{m\log m}$ bits: $\oh{m}$ bits suffice, i.e., the \emph{effective entropy} 
\cite{Golinetal2011} of 2-sided queries is low.
This does not help immediately, 
since $\thetah{m\log m}$ bits are needed to store the coordinates
of the points comprising these sub-problems: to overcome this bottleneck we use ideas from
\emph{succinct indices} \cite{Barbayetal} --- namely, we separate the storage
of point coordinates from the data structure for answering range-maximum queries.
The latter data structure does not store point coordinates but instead obtains
them as needed from a global data structure.

We solve 3-sided and 4-sided subqueries by recursion, or by using
structures for range maximum queries on matrices \cite{AtallahYuan,Brodaletal}. 
When recursing, we cannot afford the space required
to store the structures for rank space reduction for each such subproblem: a further key idea 
is to  use a single global structure to map references to points within 
recursive subproblems back to the original points.

By reusing ideas from the data structure for 2-sided queries, 
we obtain two stand-alone results on \emph{succinct indices} for 2-sided range maxima queries.  
In a succinct index, we wish to answer queries on some data, and it
is assumed that the succinct index is not charged for the space required to
store the data itself, but only for any additional space it requires to
answer the queries.  However, the succinct index can access the data only
through a specific interface (see \cite{Barbayetal} for a discussion of the 
advantages of succinct indices).  In our case, given $N$ points in rank 
space, together with priorities, we wish to answer 2-sided range-maximum
queries under the condition that the point coordinates are stored
``elsewhere''---the data structure is not ``charged'' for the space needed
to store the points ``elsewhere''---and are assumed to be available to the data structure in one of two ways:

\begin{itemize}
\item Through an orthogonal range reporting query.  Here, we assume that a query
that results in $k$ points being reported takes $T(N,k)$ time.  We assume that
$T$ is such that $T(N, O(k)) = O(T(N,k))$.
\item As in Bose et al. \cite{Boseetal09}, we assume that the point coordinates are stored
in read-only memory, permitting random access to the coordinates of the $i$-th point.
However, the ordering of the points is specified by the data structure, which we call
the \emph{permuted-point} model.\footnote{Clearly, a succinct index for this problem is interesting
only if it uses $o(N)$ words = $o(N \log N)$ bits of space, so if the points are
stored in arbitrary order in read-only memory, the succinct index cannot itself
store the permutation that re-orders the points.}
\end{itemize}
In both cases, we are able to achieve $O(N)$-bit indices with fast query time, namely
 $O(\log \log N \cdot (\log N  + T(N,\log N)))$ time, and $O(\log \log N)$ time, respectively.

The paper is organized as follows.  We first describe some building blocks used in
Section 2.  Section 3 is devoted to our main result, and Section 4 describes the
succinct index results.  


\section{Preliminaries} \label{sec:prelim}
Our result builds upon a number of relatively new results, all
of which are essential in one way or the other to obtain the final bound.
In order to support mapping between recursive sub-problems, we
use the following primitives on a set $S$ of $N$ points in rank space.
A \emph{range counting} query reports 
the \emph{number} of points within a query rectangle:
\begin{lemma}[\cite{JaJa04}]\label{lem:orc}
Given a set of $N$ points in rank space in two dimensions, there is a data structure 
with $O(N)$ words of space that supports range counting queries in $\oh{\log N / \log\log N}$ time.
\end{lemma}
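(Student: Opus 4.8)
The plan is to reduce four-sided range counting to \emph{dominance counting} and to build a compact structure for the latter. Given a query rectangle $[x_1,x_2]\times[y_1,y_2]$, an inclusion--exclusion over its four corners expresses its count as a signed sum of four dominance counts of the form ``number of points $p$ with $p_x\le a$ and $p_y\le b$''. Hence it suffices to construct a structure answering a single dominance query in $\oh{\log N/\log\log N}$ time using $\oh{N}$ words; the constant factor of $4$ is absorbed. Because the points live in rank space, the coordinates are a permutation of $[N]$, which is exactly what makes the succinct encodings below possible.

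Next I would impose an $f$-ary search tree over the points sorted by $x$-coordinate, with fan-out $f=\thetah{\log^{\eps}N}$ for a small constant $\eps>0$, so that the height is $\oh{\log_f N}=\oh{\log N/\log\log N}$. Each internal node $v$ owns the points whose $x$-coordinate falls in its $x$-interval, and partitions them among its $f$ children by $x$. At $v$ I store these points in $y$-\emph{sorted} order, and for each I record only the index in $\{1,\dots,f\}$ of the child into which it descends. This is a sequence of at most $|v|$ symbols over an alphabet of size $f$, costing $|v|\log f$ bits. Summing over one level gives $\sum_v |v|=N$ points and $N\log f$ bits; over all $\oh{\log N/\log\log N}$ levels the total is $N\log f\cdot\oh{\log N/\log\log N}=\oh{N\log N}$ bits $=\oh{N}$ words, since $\log f=\thetah{\eps\log\log N}$ cancels the height. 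Crucially, no explicit $y$-coordinates are stored at the nodes.

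The query descends the root-to-leaf path determined by the $x$-bound $a$, maintaining two quantities: a running answer, and the local $y$-rank $i$ of the $y$-bound $b$ within the current node's point set (how many of the node's points have $y\le b$). At a node whose target child is the $k$-th, I add to the answer the number of the first $i$ points, in $y$-order, whose child index is at most $k-1$ -- a \emph{generalized prefix-rank} query on the stored symbol sequence -- and then update $i$ to the number of those first $i$ points whose child index equals $k$, which is the local $y$-rank of $b$ inside the chosen child. Each step is a constant number of generalized-rank operations on a sequence over an alphabet of size $f=\log^{\eps}N$; packing $\thetah{\log N/\log f}$ symbols per machine word and precomputing four-Russians tables of $\oh{N}$ total size answers each such query in $\oh{1}$ time. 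With $\oh{1}$ work per level, the query runs in $\oh{\log N/\log\log N}$.

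The main obstacle is this last step: simultaneously (i) supporting the two-parameter generalized-rank query -- prefix length $i$ together with alphabet threshold $k$ -- in $\oh{1}$ time within $|v|\log f + o(|v|\log f)$ bits per node, and (ii) threading the $y$-rank $i$ correctly through the descent without ever materialising $y$-coordinates at the nodes. Both rest on the rank-space assumption and on choosing $\eps$ small enough that the packed-word and tabulation overheads stay within $\oh{N}$ words; verifying that these succinct rank primitives compose to yield exactly the claimed bounds is where the real work lies.
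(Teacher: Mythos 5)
This lemma is not proved in the paper at all: it is imported as a black box from J\'aJ\'a, Mortensen and Shi \cite{JaJa04}, so there is no in-paper argument to compare against. That said, your proposal is essentially a correct reconstruction of the standard proof of the cited result: inclusion--exclusion down to dominance counting, then a fan-out-$\log^{\eps}N$ range tree (equivalently, a multiary wavelet tree) over the $x$-order, storing at each node only the $y$-ordered sequence of child labels and descending with generalized rank queries, with rank space guaranteeing that the initial $y$-rank at the root is simply $b+1$ and that no $y$-coordinates need ever be materialised. The space accounting ($N\log f$ bits per level times $O(\log N/\log\log N)$ levels, i.e.\ $O(\eps N\log N)$ bits) and the descent invariant are both right. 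The one step you gloss over is the $O(1)$-time two-parameter rank query: a four-Russians table over packed blocks only answers counts \emph{within} a block, so you additionally need cumulative counts at block boundaries for each of the $f$ thresholds, which forces the standard two-level superblock/block scheme (as in Ferragina--Manzini--M\"akinen--Navarro) to keep each node at $|v|\log f + o(|v|\log f)$ bits; with $f=\log^{\eps}N$ this is routine, so it is a presentational gap rather than a mathematical one.
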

A \emph{range reporting} structure supports the operation of 
listing the coordinates of all points within a query rectangle. 
We use the following consequence of a result of Chan \etalcite{Chan10}:
\begin{lemma}[\cite{Chan10}]\label{lem:orr}
Given a set of $N$ points in rank space in two dimensions, there is a data structure with $O(N)$
words of space that supports range reporting queries in $\oh{(1+k)\log^{1/3} N}$ time where $k$ is
the number of points reported.
\end{lemma}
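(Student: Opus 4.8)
The plan is to obtain this orthogonal range reporting structure by the now-standard route of reducing a four-sided query to a collection of axis-open (dominance) subqueries via a range tree, and then spending the bulk of the effort on recovering the coordinates of reported points within the $O(N)$-word budget. Since the points are in rank space, I would build a range tree over the $x$-coordinates with a large fan-out $B$ (a parameter fixed later), so that its height is $O(\log_B N) = O(\log N / \log B)$. Each node stores the points of its subtree ordered by $y$-coordinate. A query rectangle $[x_1,x_2]\times[y_1,y_2]$ decomposes along the two root-to-leaf search paths for $x_1$ and $x_2$ into $O(\log_B N)$ canonical nodes whose $x$-projections tile $[x_1,x_2]$; within each such node the points to report are exactly those whose $y$-coordinate lies in $[y_1,y_2]$, i.e. a contiguous block of that node's $y$-sorted list. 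To avoid re-searching for $y_1,y_2$ at every node I would thread the usual \emph{fractional cascading} pointers through the tree, so that after one search at the root the endpoints of the relevant block are available at every canonical node in $O(1)$ time; to keep the per-level navigation sublogarithmic despite the fan-out $B$, the $B$ splitting keys and cascade pointers at a node would be kept word-packed and searched in $O(1)$ time whenever $B \le \log^{O(1)} N$.

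The real work is in the reporting itself: an entry in a node's $y$-sorted block is only a position, not an actual point, so for each of the $k$ answers I must recover its original $(x,y)$ pair. This is exactly the \emph{ball-inheritance} problem -- tracking a ball identified by its rank in one node's list to its identity (equivalently, to its rank at a leaf). I would solve it with a parameterized scheme: contract the tree into super-levels spanning $\log B'$ tree levels, and at each super-level store, for every ball, a short label recording which of the $B'$ descendant super-nodes it enters, together with enough rank information to translate a position across one super-level jump in $O(1)$ time using word-packed rank/select. A single recovery then costs $O(\log_{B'} N)$ time, and the labels occupy $O(N \log B' \cdot \log_{B'} N) = O(N \log N)$ bits $= O(N)$ words, independent of $B'$; the auxiliary rank structures are what consume the extra space, and balancing their size against $B'$ is where the tradeoff lives.

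Putting the pieces together, the query cost is the sum of the top-level search, the $O(\log_B N)$ per-node work, and $k$ coordinate recoveries at $O(\log_{B'} N)$ each, all under an $O(N)$-word space constraint. The main obstacle -- and the step I expect to dominate the design -- is this ball-inheritance tradeoff: naive navigation costs $\Theta(\log N)$ per recovered point, whereas forcing each recovery to be truly constant-time blows the space up to super-linear, so one must tune the super-level fan-out $B'$ (and the internal word-packed rank structures) so that neither the number of jumps nor the space explodes. Choosing the parameters so that the three competing costs -- search, per-node overhead, and per-point recovery, each governed by a different power of the fan-out under the space budget -- come into balance is what should yield the stated $O((1+k)\log^{1/3} N)$ query time; I would verify the exponent by writing space and time as functions of the fan-out exponents and optimizing, expecting the symmetric three-way split to produce the cube root.
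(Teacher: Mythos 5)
First, a framing point: the paper itself contains \emph{no} proof of this lemma---it is a black-box citation of Chan, Larsen and \Patrascu{} \cite{Chan10}, and the exponent $1/3$ is simply an instantiation of that paper's tradeoff ``$O(N)$ words and $O((1+k)\log^{\epsilon}N)$ query time for any constant $\epsilon>0$,'' with $\epsilon=1/3$ chosen by the present authors so that retrieving $O(\sqrt{\log N})$ points in slab-select costs $O(\log^{5/6}N)=o(\log N/\log\log N)$. So the cube root is not the outcome of any optimization, and your expectation that ``the symmetric three-way split produces the cube root'' is chasing a balance point that does not exist. That said, your overall architecture---a range tree over one coordinate, with the coordinates of reported points recovered via what you correctly name the ball-inheritance problem---is exactly the architecture of the cited work, so your reconstruction is aimed at the right target.

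Two concrete steps of your design, however, would not deliver the stated bound. (1) \emph{The additive term.} You visit all $\Theta(\log_B N)$ canonical nodes, and your $O(1)$ in-node search by word-packing requires $B\le\log^{O(1)}N$; this makes the query cost $\Theta(\log N/\log\log N)$ even when $k=0$, which is $\omega(\log^{1/3}N)$ and already violates the claimed bound. Conversely, pushing the fan-out to $B=2^{\log^{2/3}N}$ to shrink the node count destroys the $O(1)$ in-node search (and fractional cascading), since $B$ keys of $\Theta(\log N)$ bits each cannot be packed into $O(1)$ words. The cited construction avoids enumerating canonical nodes altogether: a $4$-sided query is split at the LCA of $x_1,x_2$ into two $3$-sided queries, each answered \emph{inside a single node} by repeatedly querying a succinct, encoding-only range-maximum structure over that node's $x$-coordinates in $y$-order, recursing only while candidates pass the $x$-threshold; thus all super-constant work is charged to the $O(1+k)$ coordinate recoveries, and no per-level overhead ever arises. (2) \emph{The ball-inheritance accounting proves too much.} You claim each super-level jump costs $O(1)$ via ``word-packed rank/select'' while the labels cost $O(N\log N)$ bits independent of $B'$. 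If that were available at every scale, taking $B'=N^{\Omega(1)}$ would give constant-time recovery in linear space---strictly better than the tradeoff of \cite{Chan10} itself, where constant-time recovery costs $\Theta(N\log^{\epsilon}N)$ words. The unsupported step is the rank operation: word-packing does not give $O(1)$-time rank over an alphabet of size $B'$ on sequences occupying more than $O(1)$ words, and making jumps fast at large scales is precisely what consumes extra space. With an $O(N)$-word budget, the recovery time achievable by the cited geometric-scales scheme is $O(\log^{\epsilon}N)$, not $O(\log_{B'}N)$ with unit-cost jumps; your hedge that ``the auxiliary rank structures are what consume the extra space'' concedes exactly the point on which the asserted arithmetic fails.
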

The \emph{range selection} problem is as follows: given an input
array $A$ of size $N$, to preprocess it so that given a query
$(i,j,k)$, with $1 \le i \le j \le N$,  we return an index $i_1$ 
such that $A[i_1]$ is the $k$-th smallest of the elements
in the subarray $A[i],A[i + 1],\ldots,A[j]$. 
\begin{lemma}[\cite{DBLP:conf/isaac/BrodalJ09}]\label{lem:rangemedian}
Given an array of size $N$, there is a data structure with
$O(N)$ words of space that supports range selection queries in
$O(\log N/\log \log N)$ time.
\end{lemma}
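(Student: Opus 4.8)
The plan is to build a multiary wavelet tree over the \emph{value ranks} of $A$ and to answer a selection query by a single root-to-leaf descent, spending $O(1)$ time per level. First I would sort the $N$ entries of $A$ by value, breaking ties by index, so that we work with a total order on $N$ distinct keys. The root of the tree owns all value ranks $\{1,\dots,N\}$; each internal node partitions its contiguous block of ranks into $f = \Theta(\log^{\epsilon} N)$ equal sub-blocks, one per child, with recursion stopping at singletons. Since $\log f = \Theta(\log\log N)$, the tree has depth $D = O(\log_f N) = O(\log N/\log\log N)$. At a node $v$ owning the set $S_v \subseteq \{1,\dots,N\}$ of array positions whose values fall in $v$'s range, I store, in \emph{index order}, a sequence $C_v$ over the alphabet $[f]$ recording, for each position in $S_v$, the child it descends into. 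Each symbol costs $\log f = \Theta(\log\log N)$ bits; since every array position contributes one symbol at each of the $D$ levels, the sequences total $N\cdot D\cdot \log f = O(N\log N)$ bits $= O(N)$ words.

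To answer $(i,j,k)$ I maintain a query interval $[i'\twodots j']$ into the current node's sequence (initially $[i\twodots j]$ at the root) and a residual rank $k'$ (initially $k$). At node $v$ I must find the child $c^{*}$ containing the $k'$-th smallest element among the positions of $[i'\twodots j']$: letting $n_c$ be the number of symbols equal to $c$ in $C_v[i'\twodots j']$, $c^{*}$ is the least $c$ with $n_1+\dots+n_c \ge k'$. I then set $k' \gets k' - (n_1+\dots+n_{c^{*}-1})$, remap the endpoints into child $c^{*}$ by $i'\gets \mathrm{rank}_{c^{*}}(C_v,i'-1)+1$ and $j'\gets \mathrm{rank}_{c^{*}}(C_v,j')$, and recurse, where $\mathrm{rank}_{c}(C_v,p)$ counts occurrences of $c$ in the length-$p$ prefix of $C_v$. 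After $D$ steps I reach a leaf, which pins down the value rank of the answer; a stored inverse map from value rank to array index then returns the required index $i_1$.

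The crux --- and the step I expect to be hardest --- is to perform each node's work in $O(1)$ rather than the naive $O(\log f)$, since only then does the total become $O(D) = O(\log N/\log\log N)$. I would support the per-symbol ranks $\mathrm{rank}_c$ with the usual two-level scheme: cut $C_v$ into blocks of $\Theta(\log N/\log f)=\Theta(\log N/\log\log N)$ symbols so a block occupies $O(\log N)$ bits, keep cumulative per-symbol counts at block and superblock boundaries, and resolve the within-block part by a precomputed table; taking the block width to hold at most $\tfrac12\log N$ bits keeps the table size $\sqrt{N}\cdot\mathrm{polylog}(N)=o(N)$. The genuinely delicate part is computing $c^{*}$ in $O(1)$: I would obtain the packed vector of interval counts $(n_1,\dots,n_f)$, or its prefix sums, and locate the first prefix sum reaching $k'$ by word-parallel, fusion-tree-style comparisons, which forces $f$ to be polylogarithmic so the vector spans few machine words. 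Balancing these constraints --- a branching factor large enough to give depth $O(\log N/\log\log N)$, yet small enough that both the tabulated ranks and the word-level pivot search run in $O(1)$ within $O(N)$ words --- is the main obstacle, and is exactly what pins the query time at $O(\log N/\log\log N)$.
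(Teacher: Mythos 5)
You are attempting to prove a statement that the paper itself never proves: Lemma~\ref{lem:rangemedian} is imported verbatim from Brodal and J{\o}rgensen \cite{DBLP:conf/isaac/BrodalJ09}, so there is no in-paper argument to compare against, only the cited construction. Your skeleton does match the known approach at a high level: a fan-out $f=\Theta(\log^{\epsilon}N)$ tree over value ranks, per-node child-label sequences stored in index order, and a root-to-leaf descent that remaps the query interval and residual rank at each node. Your accounting for that skeleton is also right: depth $O(\log N/\log\log N)$, sequences totalling $O(N\log N)$ bits, and the two endpoint remappings per node (rank queries for a \emph{single} symbol $c^{*}$) can indeed be supported in $O(1)$ time within the space budget, provided block counts are stored relative to superblocks (storing full-precision per-symbol counts at every block boundary, as you literally describe, costs a factor $f$, i.e.\ $\Theta(Nf\log f)$ bits over the tree, which already breaks $O(N)$ words; this part is fixable by the standard relative-count trick).

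The genuine gap is exactly at the step you flag as the crux, and the mechanism you propose for it does not work. To locate $c^{*}$ you need (prefix sums of) the interval counts $(n_1,\dots,n_f)$, and each $n_c$ can be as large as $N$, so any exact packed representation of this vector occupies $\Theta(f\log N)=\Theta(\log^{1+\epsilon}N)$ bits, i.e.\ $\Theta(\log^{\epsilon}N)$ machine words --- not $O(1)$ words. Consequently you can neither \emph{assemble} the vector in constant time (issuing $\mathrm{rank}_c$ once per symbol costs $\Omega(f)$, and the stored cumulative count vectors at boundaries are themselves $\Theta(f\log N)$ bits wide), nor \emph{search} it in constant time by word parallelism: packed comparisons need many entries per word, and the entries here, like the threshold $k'$, are $\Theta(\log N)$-bit numbers. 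Fusion-tree sketching does not rescue this, because sketches must be precomputed on static key sets, whereas your vector is query-dependent (the componentwise difference of two prefix-count columns determined by $i'$ and $j'$). With per-node cost $\Omega(\log^{\epsilon}N)$ instead of $O(1)$, the total time degrades to $\Omega(\log^{1+\epsilon}N/\log\log N)$, which is worse than a plain binary wavelet tree, so the claimed bound is unsupported. What is missing is precisely the main technical contribution of \cite{DBLP:conf/isaac/BrodalJ09}: an encoding of the prefix-count information under which the child search at a node can be resolved by $O(1)$ word operations (reduced-precision/packed columns with controlled error, corrected by $O(1)$ exact single-symbol rank queries), rather than the generic combination of two-level rank directories and word-level search that you invoke.
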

\section{The data structure}
\newcommand{\Inf}{\mathit{Inf}}
\newcommand{\Left}{\mathit{Left}}
\newcommand{\Right}{\mathit{Right}}

In this section we show our main result:
\begin{theorem}
\label{thm:4sfinal}
Given $N$ points in two-dimensional rank space, and their priorities, there is
a data structure that occupies $O(N)$ words of space and
answers range maximum queries in $O(\log N \log \log N)$ time.
\end{theorem}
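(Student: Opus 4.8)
The plan is to reduce a 4-sided query to a small number of 2-sided queries, treating the 2-sided structure as the workhorse, exactly as the introduction foreshadows. I would build a range-tree-style hierarchy that recursively splits the points by their $x$- and $y$-coordinates into slabs. Given a query $[x_1,x_2]\times[y_1,y_2]$, I descend to the level at which the rectangle first straddles a dividing line; there the query splits into two pieces that are each open on one side (3-sided), and each 3-sided piece decomposes further into $O(1)$ genuinely 2-sided pieces together with a block of slabs that are fully spanned in one coordinate. The fully-spanned block contributes a range maximum over a coarse grid of slab representatives, which I would answer in $O(1)$ time with a matrix range-maximum structure \cite{AtallahYuan,Brodaletal}. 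Thus the whole query reduces to $O(\log N)$ base 2-sided queries plus $O(1)$ matrix lookups, and the answer is the highest-priority candidate among the returns.

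The crux is to make the 2-sided structure small enough that $O(\log N)$ copies fit in $O(N)$ words. Here I would use the \emph{effective entropy} of 2-sided range maxima \cite{Golinetal2011}: for a subproblem of $m$ points the combinatorial information that determines every 2-sided answer occupies only $O(m)$ bits, as opposed to the $\Theta(m\log m)$ bits needed merely to name the points. Storing this $O(m)$-bit encoding at each slab and decompressing it on demand (in blocks small enough to decode by table lookup), the total over all $O(\log N)$ levels is $O(N\log N)$ bits $= O(N)$ words. Such an encoding returns only an abstract handle to the answer point; to turn it into a concrete point I would adopt the \emph{succinct index} viewpoint \cite{Barbayetal} and separate coordinate storage from the query structure, keeping one global copy of the point set (in $O(N)$ words, giving $O(1)$ access to coordinates and priorities) that the slab structures consult rather than duplicate.

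What remains is the bookkeeping of the recursion. A naive implementation would equip each subproblem with its own rank-space reduction mapping its points into $[m]\times[m]$, costing $\Theta(m\log m)$ bits per subproblem and destroying the space bound. Instead I would maintain a single global rank-space structure and translate between a subproblem's local indices and the original points on the fly, using range counting (Lemma~\ref{lem:orc}) to obtain a point's local rank inside a slab and range selection (Lemma~\ref{lem:rangemedian}) to recover the global point from a local rank; range reporting (Lemma~\ref{lem:orr}) supplies coordinates where the succinct-index interface demands it. Each such primitive costs $O(\log N/\log\log N)$ time and each occupies $O(N)$ words once, shared across all subproblems.

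The main obstacle is the interplay between these savings and the time budget. Effective entropy and the shared coordinate/mapping structures are what keep space at $O(N)$ words, but they make each 2-sided base query cost a decompression step plus, potentially, a global primitive. To land at $O(\log N\log\log N)$ I would have to show that a base 2-sided query resolves in $O(\log\log N)$ amortized time, and---crucially---that the expensive $O(\log N/\log\log N)$-time global primitives are invoked only $O(\log\log N)$ times per query rather than once per slab (for instance, by comparing the $O(\log N)$ candidates through globally stored priorities in $O(1)$ each and materializing actual coordinates only for the single winner). Verifying this accounting, and choosing block sizes of order $\log N/\log\log N$ so that each block fits in $O(1)$ machine words, is where the real work lies; once it checks out, multiplying $O(\log N)$ base queries by $O(\log\log N)$ per-query cost---with the matrix range-max and combination steps dominated---yields the claimed bound.
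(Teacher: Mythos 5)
Your overall toolkit (compressed 2-sided structures consulted through shared global rank-space primitives, matrix range-maximum structures for aligned pieces) matches the paper's, but your query decomposition inverts the critical accounting, and this is a genuine gap rather than deferred bookkeeping. A binary, range-tree-style split produces $\Theta(\log N)$ base 2-sided queries, so your time budget forces each one to run in $O(\log\log N)$ amortized time. That is unachievable with the machinery you invoke: a 2-sided structure that respects an $o(m\log m)$-bit space budget cannot store its points' coordinates, so answering a query in it requires materializing geometry through the global structures, and every invocation of Lemma~\ref{lem:orc}, Lemma~\ref{lem:rangemedian} or Lemma~\ref{lem:orr} costs $\Omega(\log N/\log\log N)$ (resp.\ $\Omega(\log^{1/3} N)$) time. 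Your proposed escape --- comparing the $O(\log N)$ candidates ``through globally stored priorities in $O(1)$ each'' and materializing coordinates only for the winner --- is circular: a compressed slab structure returns only a local handle, and translating that handle into a global point identity (so that its priority can be looked up at all) is exactly the expensive slab-select/materialization step you are trying to avoid. Nor can the $O(m)$-bit effective-entropy encodings adjudicate between candidates from different slabs: by design they preserve only the partial order of priorities needed to answer queries \emph{within} a slab, so cross-slab comparisons are undefined without decompression. The permuted-point route to $O(\log\log N)$-time 2-sided queries does not rescue the plan either, since it requires the points to be stored in an index-chosen order, i.e., $N$ words per tree level and $\Theta(N\log N)$ words overall. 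With $\Theta(\log N)$ 2-sided queries each costing $\Omega(\log N/\log\log N)$, your bound degrades to $\Omega(\log^2 N/\log\log N)$ --- no better than Chazelle.

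The paper avoids this precisely by choosing a decomposition whose entire purpose is to make the number of terminal queries $O(\log\log N)$ rather than $O(\log N)$: a problem of size $n$ is cut into slabs of width $k=\sqrt{n\log N}$, so the recursion has depth $O(\log\log N)$, and the query decomposition maintains the invariant that a 3-sided query spawns at most \emph{one} recursive 3-sided query (plus $O(1)$ terminal 2-sided and square-aligned pieces). Hence only $O(\log\log N)$ 2-sided queries arise in total, and each one is \emph{allowed} to be expensive: $O(\log N)$ time, consisting of a point location in an $O(n\sqrt{\log n})$-bit ``skeleton'' plus $O(\log\log N)$ slab-selects at $O(\log N/\log\log N)$ each. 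Note also that a queryable compressed 2-sided structure needs this skeleton (explicit coordinates for a $1/\sqrt{\log n}$ fraction of the lines of influence); the bare $O(m)$-bit encoding in your plan would have to be decompressed wholesale through global primitives at query time. If you rework the proposal, replace the binary tree by the $\sqrt{n\log N}$-slab recursion and redo the accounting as (number of terminal queries) $\times$ (cost per terminal query) $= O(\log\log N)\times O(\log N)$.
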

We first give an overview of the data structure.
We begin by storing all the points (using their input coordinates)
once each in the structures of Lemmas~\ref{lem:orc} and \ref{lem:orr}. 
We also store an instance of the data structure of Lemma~\ref{lem:rangemedian} 
once each for the
arrays $X$ and $Y$, where $X[i] = \nu(i)$ and $Y[i] = \nu^{-1}(i)$ for
$i \in N$ ($X$ stores the $y$-coordinates of the points in
order of increasing $x$-coordinate, and $Y$ the $x$-coordinates in 
order of increasing $y$-coordinate). These four ``global'' data structures
use $O(N)$ words of space in all.

We recursively decompose the problem
\`{a} la Afshani \etalcite{afshani:dominance}.
Let $n$ be the recursive problem size (initially $n = N$).
Given a problem of size $n$, we divide the problem into $n/k$ mutually
disjoint horizontal slabs of size $n$ by $k$, and $n/k$ mutually disjoint
vertical slabs of size $k$ by $n$. A horizontal and vertical slab intersect 
in a square of size $k\times k$. We recurse on each horizontal or vertical slab:
observe that each horizontal or vertical slab has exactly $k$ points in it,
and is treated as a problem of size $k$---i.e. it is logically comprised
of two permutations $\upsilon$ and $\pi$ on $[k]$ (Fig.~\ref{fig:cases}(L); Sec. \ref{sec:space}).  
Clearly, given a slab in a problem of size $n$ containing $k$ points, 
we require some kind of mapping between the coordinates in the slab (which in one dimension
will be from $[n]$) and the
recursive problem in order to view the slab
as a problem of size $k$.  A key to the space-efficiency of our 
data structure is that this mapping is \emph{not} explicitly stored, and
is achieved through a \emph{slab-rank} operation.  The 
 \emph{slab-select} problem is a generalization of the inverse 
of the slab-rank problem (Sec. \ref{sec:slabrank}).

The given query rectangle is decomposed into a number of {disjoint} 
2-sided, 3-sided
and 4-sided queries, based upon the decomposition of the input into slabs.  
There are three kinds of \emph{terminal} queries which do not generate further
recursive problems: all 2-sided queries are terminal,
4-sided queries whose sides are slab boundaries  
are terminal, and all queries that reach slabs at the bottom
of the recursion are terminal.  The problems (or data structures)
that involve answering
terminal queries (or are used to answer terminal queries) are also termed terminal. 
Each terminal query produces 
some \emph{candidate} points: the set of all
candidate points must contain the final answer.   A key invariant needed
to achieve the space bound is that 
all terminal problems of size $n$---except those at the bottom of
the recursion---use space $o(n \log n)$ bits (Sec. \ref{sec:space}).

Clearly, since storing the input permutation representing the point sets in terminal
problems takes $\Theta(n \log n)$ bits, we do not store them explicitly.  Instead, 
the terminal data structures are \emph{succinct indices}---the points that comprise them are 
accessed by means of queries to a single global data structure.  The terminal 4-sided
problems use an index due to \cite{Brodaletal}, while the 2-sided problems reduce
the range maximum query to planar point location. 
Although there is a succinct index for planar point location \cite{Boseetal09},
this essentially assumes that the points that comprise the planar subdivision
are stored explicitly (``elsewhere'') and permuted in a manner specified by
their data structure. In our case, if the points are represented explictly, they would occupy
$\Theta(N \log N / \log \log N)$ words across all recursive problems.  
A key to our approach is an implicit representation of the planar sub-division in the
recursive problems,
relevant parts of which are recomputed from a ``compressed'' representation 
at query time (Sec.~\ref{sec:2sided}); the other key step is to note that
$O(n)$ bits suffice to encode the priority information needed to answer 2-sided
queries in a problem of size $n$ (Sec.~\ref{sec:effentropy}).

\subsection{A recursive formulation and its space usage}
\label{sec:space}

The recursive structure is as follows.  Let $L = \log N$, and consider
a recursive problem of size $n$ (at the top level $n = N$).   
We assume that $N$ is a power of 2, as are a number of expressions which
represent the size of recursive problems.  This can readily be achieved
by replacing real-valued parameters $x \ge 1$ by $2^{\lfloor \log x \rfloor}$
or $2^{\lceil \log x \rceil}$ without affecting the asymptotic complexity.
Unless we have reached the bottom of the recursion, we partition the 
input range $[n] \times [n]$ into mutually disjoint vertical slabs 
of width $k = \sqrt{nL}$ and also into
mutually disjoint horizontal slabs of height $k = \sqrt{nL}$ -- each such
slab contains $k$ points and can be logically viewed as a recursive
problem of size $k$ (see Fig.~\ref{fig:cases}(L)).  
Observe that the input is divided into 
$(n/k)^2 = n/L$ \emph{squares} of size $k\times k$, each representing
the intersection of a horizontal slab with a vertical one.  
We need to answer either 
2-sided, 3-sided or 4-sided queries on this problem.

The data structures associated with the current recursive problem are:
\begin{itemize}
\item for problems at the bottom of the recursion, we store an instance of 
Chazelle's data structure which uses $\oh{n \log n}$ bits of space
and has query time $\oh{(\log n)^2}$.
\item For 2-sided queries (which are terminal) in non-terminal problems 
we use the data structure with space usage $O(n \sqrt{\log n})$ bits 
 described in Sec.~\ref{sec:2sided}.
\item 3- and 4-sided queries, all of whose sides are slab boundaries, are \emph{square-aligned}:
they exactly cover a rectangular sub-array of squares.  For such queries,
we use a $O(n)$-bit data structure comprising.
\begin{itemize}
\item A $n/k \times n/k$ matrix  containing the
(top-level) $x$ and $y$ coordinates and priority of the maximum point (if any) 
in each square. This uses $O(n/L \cdot L) = O(n)$ bits.
\item The data structure of \cite{AtallahYuan,Brodaletal} for answering
2D range maximum queries on the elements in the above matrix.  This also uses $O(n)$ bits.
\end{itemize}
\end{itemize}
%
Finally, each recursive
problem has $O(\log N) = O(L)$ bits of ``header'' information, 
containing, e.g., the bounding box of the problem in the top-level
coordinate system.  Ignoring the header information,
the space usage is given by:
$$S(n) = 2 \sqrt{n/L} S(\sqrt{nL}) + O(n\sqrt{\log n}),$$
which after $r$ levels of recursion becomes:
$$S(N) = 2^r \frac{N^{1-1/2^r}}{L^{1-1/2^r}} S(N^{1/2^r} L^{1-1/2^r}) + 
\oh{2^r N \sqrt{\log (N^{1/2^r} L^{1-1/2^r})}}.$$
The recursion is terminated for the first level $r$ where 
$2^r \ge \log N / \log \log N$.  At this level, the problems
are of size $\oh{(\log N)^2}$ and $\omegah{(\log N)^{1.5}}$
and the second term in the space usage
becomes $O(N \log N)$ bits.  Applying $S(n) = O(n \log n)$ for the
base case, we see that the first term is
$O((\log N/\log \log N) \cdot N \cdot \log \log N) = O(N \log N)$ bits,
and the space used by the header information is indeed negligible.

\begin{figure}
\begin{centering}
\includegraphics[scale=0.305]{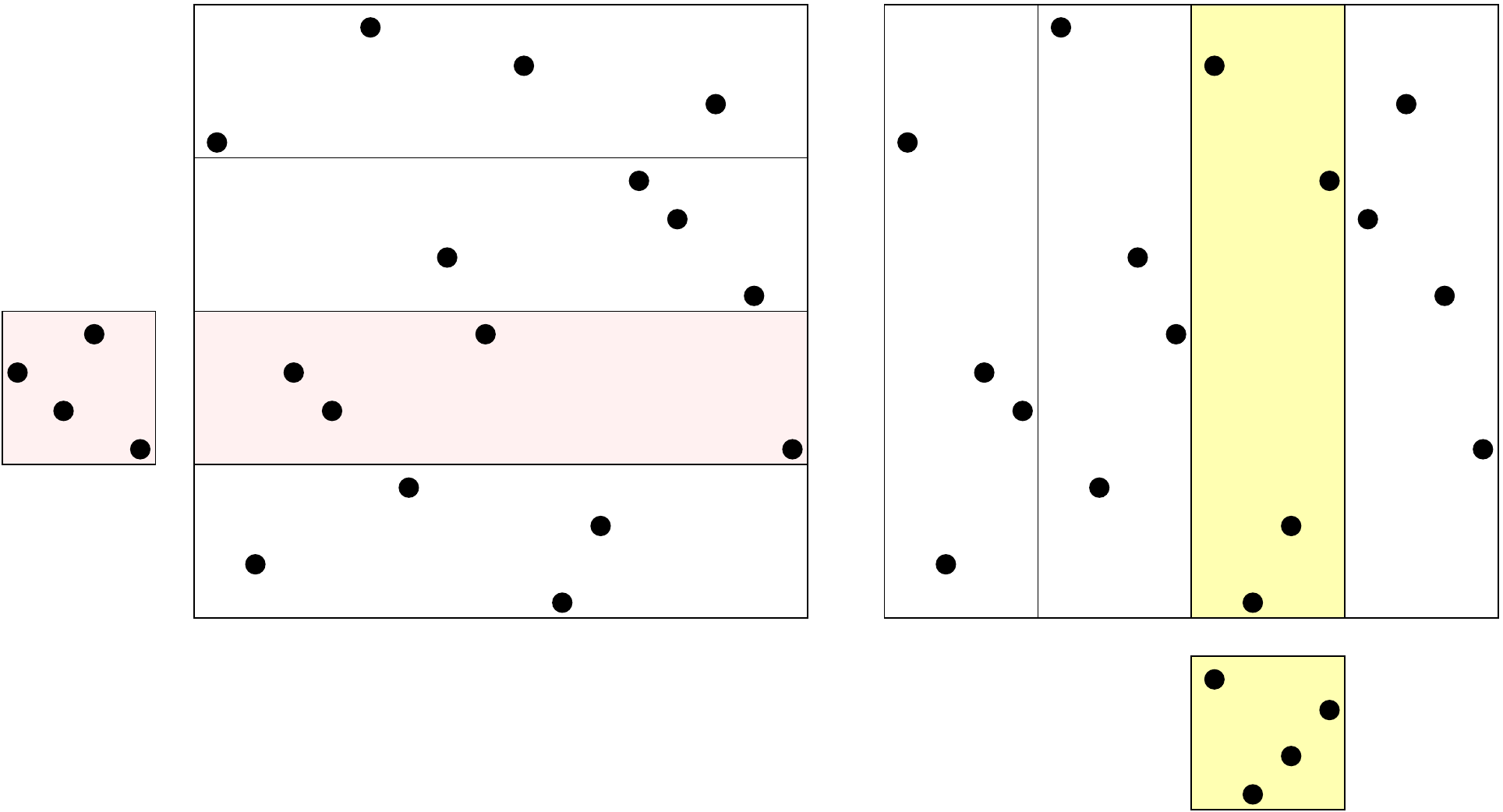}
~~~~~~~~~~~~~
\raisebox{0.8cm}{\includegraphics[scale=0.6]{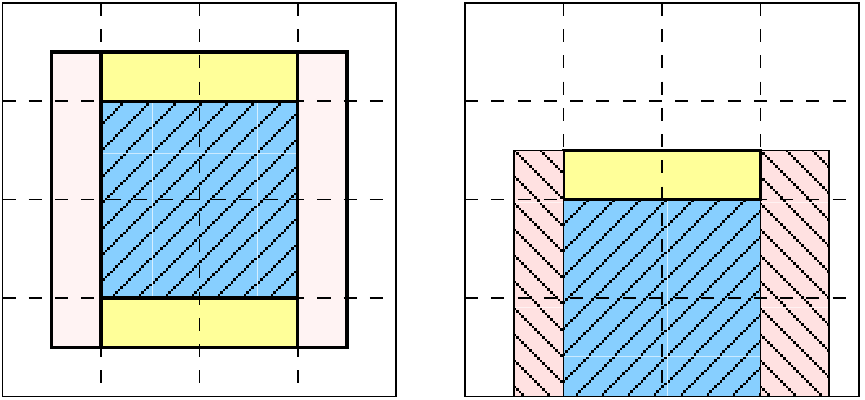}}
\end{centering}
\caption{The recursive decomposition of the input (L) and queries (R). In (R),  
shaded problems are terminal problems.  The 4-sided query is decomposed into a
square-aligned 4-sided query in the middle and four recursive 3-sided queries, two
in horizontal slabs and two in vertical slabs.  The 3-sided query is decomposed into 
two 2-sided queries in vertical slabs, a square-aligned 3-sided query and one recursive
3-sided query in a horizontal slab.}
\label{fig:cases}
\end{figure}

We now discuss the time complexity.  In general, we have to answer either
2-sided, 3-sided or 4-sided queries on a slab. Note that (see Fig.~\ref{fig:cases}(R)):
\begin{itemize}
\item A 2-sided query is terminal and generates one candidate.
\item A 3-sided query results in at most one recursive 3-sided query on a slab (generating
no candidates) at most two 2-sided queries on slabs, and at most one square-aligned
3-sided query (generating one candidate).
\item A 4-sided query either results in a recursive 4-sided query on a slab (generating 
no candidates) or generates at most one square-aligned 4-sided query (generating one
candidate), plus up to four 3-sided queries in slabs.
\end{itemize}
Since each 3-sided query only generates one recursive 3-sided query, the number
of recursive problems solved, and hence the number of candidates, is $O(r) = O(\log \log N)$.
  The time complexity will depend on the cost
of mapping query points and candidates between the problems and their
recursive sub-problems and the cost of solving the terminal problems; the former
is discussed next.

\subsection{The slab-rank and slab-select problems}
\label{sec:slabrank}

The input to each recursive problem of size $n$ is given in local
coordinates (i.e. from $[n]\times [n]$). Upon decomposing the query to
this problem, we need to solve the following \emph{slab-rank} problem
(with a symmetric variant for vertical slabs):

\begin{quote} Given a point $p = (i^*,j^*)$ in top-level coordinates,
which is mapped to $(i,j)$ in a recursive problem of size $n$, such
that $(i,j)$ that lies in a horizontal slab of size $n \times k$, 
map $(i,j)$ to the appropriate position $(i',j')$
in the size $k$ problem represented by this slab. 
\end{quote}
We formalize the ``inverse'' \emph{slab-select} problem as follows:
\begin{quote} Given a rectangle $R$ in the coordinate system of a recursive problem,
return the top-level coordinates of all points that lie within $R$.
\end{quote}
The following lemma assumes and builds upon the four ``global'' data structures
mentioned after the statement to Theorem~\ref{thm:4sfinal}. 
%
\begin{lemma}\label{lem:slab-rank-select} The slab-rank problem can be solved
in $O(\log N/\log \log N)$ time, and the slab-select problem in 
$O(\log N/\log \log N)$ time as well, 
provided that $R$ contains at most $\oh{\sqrt{\log N}}$ points. 
\end{lemma}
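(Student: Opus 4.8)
The plan is to exploit the invariant, maintained throughout the recursion, that every recursive problem consists of \emph{exactly} the top-level points lying in some axis-parallel box $B=[x_1,x_2]\times[y_1,y_2]$ recorded in the problem's header. This holds at the root ($B=[0,N)\times[0,N)$) and is preserved by the slab decomposition: a horizontal slab is the set of points whose local $y$-rank lies in an interval $[j_0,j_0+k)$, which is precisely the set of top-level points in $[x_1,x_2]\times[Y_{lo},Y_{hi}]$ for suitable $Y_{lo}\le Y_{hi}$, and symmetrically for vertical slabs. Under this invariant a point's local $x$-coordinate inside a problem is simply its rank, in $x$-order, among the points of $B$ (and likewise for $y$), so both slab-rank and slab-select reduce to translating between local ranks and top-level coordinates inside $B$, using only the four global structures.

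Consider slab-rank for a horizontal slab (the vertical case is symmetric). We are given both the top-level coordinates $(i^*,j^*)$ and the local coordinates $(i,j)$ of $p$. Since a slab consists of $k$ consecutive local rows, the slab (occupying local $y$-ranks $[j_0,j_0+k)$ with $j_0=j-(j\bmod k)$) and the output $j'=j\bmod k$ are read off from $j$ directly, with no data-structure access. It remains to compute $i'$, the $x$-rank of $p$ among the $k$ slab points. The one obstacle is that this is a \emph{box-constrained} selection problem, while Lemma~\ref{lem:rangemedian} performs selection only along a full coordinate range; I resolve it by a rank shift. Among the points with $x\in[x_1,x_2]$, the number with $y<y_1$ is a value $c_1$ obtained by one range-counting query (Lemma~\ref{lem:orc}); hence the box point of $y$-rank $m$ is exactly the point of $y$-rank $c_1+m$ in the strip $x\in[x_1,x_2]$, which a single range-selection query on $X$ over the index range $[x_1,x_2]$ returns. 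Applying this for $m=j_0$ and $m=j_0+k-1$ yields the slab's top-level $y$-interval $[Y_{lo},Y_{hi}]$, after which $i'$ is one further range count over $[x_1,i^*-1]\times[Y_{lo},Y_{hi}]$. Every step is a range count or a range selection, each costing $O(\log N/\log\log N)$, so slab-rank runs in $O(\log N/\log\log N)$ time.

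For slab-select I apply the same rank-shift idea to both coordinates to turn the query rectangle $R=[a,b]\times[c,d]$, given in the problem's local coordinates, into a top-level box. The local $x$-interval $[a,b]$ becomes $[X_a,X_b]$, where $X_a,X_b$ are the box points of $x$-ranks $a,b$; these are found by counting the points with $y\in[y_1,y_2]$ and $x<x_1$ to get the shift, followed by two range-selection queries on $Y$ over the index range $[y_1,y_2]$. Symmetrically, $[c,d]$ maps to $[Y_c,Y_d]$ via a count and two selections on $X$. Because the resulting box $[X_a,X_b]\times[Y_c,Y_d]$ is contained in $B$, the top-level points it contains are exactly the points of $B$ whose local coordinates lie in $R$, so a single range-reporting query (Lemma~\ref{lem:orr}) returns all their top-level coordinates. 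The coordinate conversion costs $O(\log N/\log\log N)$, and since $R$ holds at most $O(\sqrt{\log N})$ points, reporting costs $O(\sqrt{\log N}\cdot\log^{1/3}N)=O(\log^{5/6}N)=o(\log N/\log\log N)$; the total is $O(\log N/\log\log N)$.

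The crux, and the only genuine obstacle, is the box-constrained selection: a naive binary search that repeatedly invokes range counting would cost $\Theta(\log^2 N/\log\log N)$ and break the budget. The observation that restricting to the box merely shifts the one-dimensional selection rank by the number of points below the box, so that a single call to the range-selection structure suffices, is exactly what keeps both slab-rank and slab-select within $O(\log N/\log\log N)$ time.
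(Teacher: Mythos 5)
Your proposal is correct and follows essentially the same route as the paper: translate between local ranks and top-level coordinates via a rank shift (one global range-counting query of Lemma~\ref{lem:orc} plus range-selection queries on $X$ and $Y$ via Lemma~\ref{lem:rangemedian}), then retrieve the $O(\sqrt{\log N})$ points with a single range-reporting query of Lemma~\ref{lem:orr} in $O(\log^{5/6} N) = o(\log N/\log\log N)$ time. The only cosmetic difference is in slab-rank, where the paper exploits the fact that the slab (being itself a recursive problem) already stores its top-level bounding box in its header, so a single counting query suffices, whereas you recompute the slab's top-level $y$-extent from the parent's box before counting---both variants stay within the $O(\log N/\log\log N)$ budget.
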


\begin{proof}
We first consider the slab-rank problem.  Without loss of generality, assume that the
given $(i,j)$ is in a horizontal $n \times k$ slab. To translate $j$ to $j'$, we 
only need to subtract the appropriate multiple of $k$ in $O(1)$ time.
To map $i$ to $i'$, we need to count the number of points in the slab with $x$-coordinate
smaller than $i$ (see Fig.~\ref{fig:slabrank}(L)).  Since the top-level coordinates of 
the point $(i,j)$ are known, and the top-level coordinates of the slab are also
stored in the ``header'' of the slab by assumption, top-level coordinates of all
sides of the query are known.  As all input points are stored in a global
instance of the data structure of Lemma~\ref{lem:orc}, counting the number
of points can be performed by an orthogonal range counting query in  
$O(\log N/\log \log N)$ time.
%
%
%

\begin{figure}
\centerline{\includegraphics[scale=0.5]{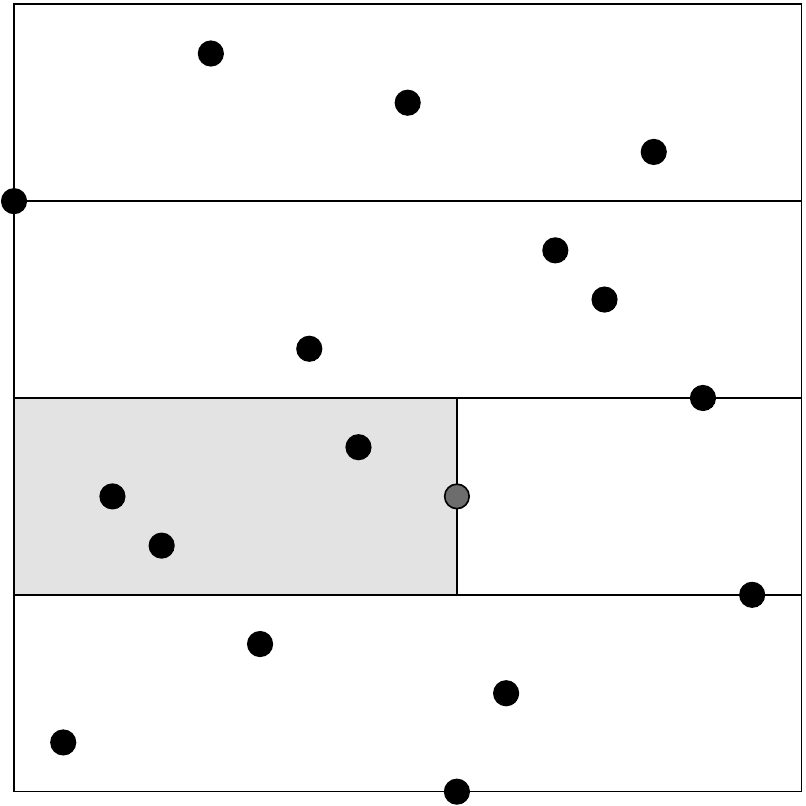}~~~~~~~~\includegraphics[scale=0.48]{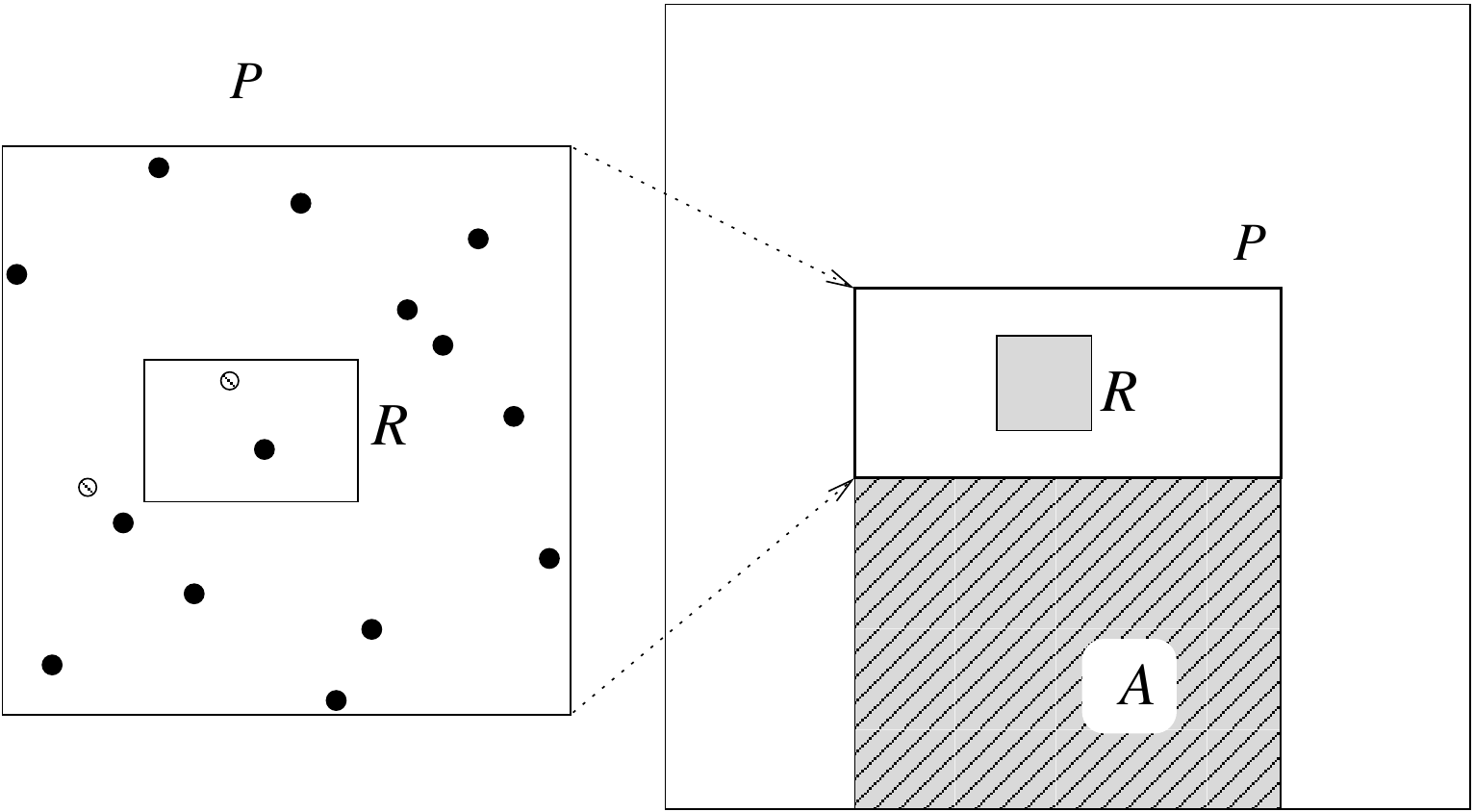}}
\caption{Slab-rank (left) and slab-select (right).}
\label{fig:slabrank}
\end{figure}

The slab-select problem is solved in a similar manner, and is
most easily explained with reference to Fig.~\ref{fig:slabrank}(R).  We are
given a recursive sub-problem $P$ and a rectangle $R$ within $P$, and we
know $P$'s bounding box in the top-level problem (as shown on the far right).
Our aim is to retrieve the top-level coordinates of the image of $R$ in the
top-level problem.  Suppose that the local $x$ and $y$ coordinates of $R$ 
are $x_l$, $x_r$, $y_b$ and $y_t$.
Then we perform a orthogonal range count (Lemma~\ref{lem:orc})
in the area $A$, which lies under $P$
but within $P$'s $x$-coordinates. This takes $O(\log N/\log \log N)$ time, and
let $z$ be the value returned. 
We then select the $z+y_b$ and $z+y_t$-th smallest $y$-coordinates within 
$X[x_l], \ldots, X[x_r]$ by Lemma~\ref{lem:rangemedian}, also taking $O(\log N/\log \log N)$ time.  The (top-level)
$y$-coordinates of the points returned (shown shaded in the middle)
are $R$'s boundaries in the top-level
coordinate system.  A similar query on $Y[y_b],\ldots,Y[y_t]$ gives the
other boundaries of $R$ in the top-level coordinate system. The $\oh{\sqrt{\log N}}$
points in this rectangle are then retrieved in $\oh{(\log N)^{5/6}} = o(\log N/\log \log N)$ time
by Lemma~\ref{lem:orr}.

\qed{}
\end{proof}

\begin{remark} 
In all applications of the slab-rank result, the top-level coordinates of
$(i,j)$ will be known, as $(i,j)$ will either be a vertex of the original 
input query, or is the result of intersecting a horizontal (or vertical) 
line through the vertex of a recursive problem (whose top-level coordinates 
are inductively known) with a vertical (or horizontal) line defining a 
recursive sub-problem. 
\end{remark}

\subsection{Encoding 2-sided queries}
\label{sec:effentropy}

In this section we show Lemma~\ref{lem:2sentropy}.
Although the reduction of 2-sided range maxima 
queries at point $q$ ($RMQ(q)$ hereafter) to orthogonal
planar point location is not new, the observation
about the amount of priority information needed to answer 
$RMQ$ is new (and essential).
This lemma shows that although storing the permutation $\pi$ itself requires
$\Theta(n\log n)$ bits, the ``effective entropy'' \cite{Golinetal2011} of the
permutation with respect to 2-sided range maximum queries 
is much lower, generalizing the equivalent statement regarding 1D range-maximum
queries \cite{DBLP:journals/siamcomp/FischerH11,Vuillemin1980}.

%
%
\begin{lemma}
\label{lem:2sentropy}
Given a set $S$ of $n$ points from $\mathbb{R}^2$ and relative
priorities given as a permutation $\pi$ on $[n]$, 
the query $RMQ(q)$ can be reduced to point
location of $q$ in a collection of at most $n$ horizontal 
semi-open line segments,
whose left endpoints are points from $S$, and 
whose right endpoints have $x$-coordinate 
equal to the $x$-coordinate of
some point from $S$. Further, given at most 
$2(n-r) + \log {{n} \choose {r}} \le 3n$ bits of extra information, 
the collection of line segments can be 
reconstructed from $S$, where $r$ is the number of
\emph{redundant} points---those that are never the 
answer to any query---in $S$, and this bound is tight.
\end{lemma}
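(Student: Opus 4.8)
The plan is to fix the orientation so that $RMQ(q)$ returns the point of largest priority dominated by $q$ in the lower-left quadrant $\{(x,y): x \le q_x,\ y \le q_y\}$ (the other three orientations are symmetric). First I would pin down the \emph{redundant} points: call $p \in S$ redundant if some point of strictly larger priority lies in $p$'s closed lower-left quadrant. I would check the two easy directions --- if $p$ is redundant then any query dominating $p$ also dominates that higher-priority witness, so $p$ is never an answer; conversely, if $p$ is not redundant then the query $q=p$ has $p$ itself as its answer --- so the non-redundant points are exactly the points that answer at least one query, and there are $n-r$ of them.

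For the reduction, to each non-redundant point $p$ I would attach a single horizontal, semi-open segment at height $p_y$ starting at $p$ (so its left endpoint is the point $p\in S$) and extending rightward until its first ``override'': the right endpoint has $x$-coordinate $e_p$, the smallest $x$-coordinate strictly exceeding $p_x$ among points that have larger priority than $p$ and lie weakly below height $p_y$ (and $e_p$ is the right boundary, which in rank space is some point's $x$-coordinate, if there is no such point). This yields at most $n-r\le n$ segments with the stated endpoint properties. I would then prove that $RMQ(q)$ equals the point whose segment is the \emph{highest} one met by the downward vertical ray from $q$: the true answer $a$ is non-redundant (a higher-priority point in $a$'s quadrant would also sit in $q$'s quadrant), its segment is present at $x=q_x$ and below $q_y$ (otherwise an overriding point would contradict $a$ being the maximum), and no strictly higher segment can be present there (such a segment's owner would either override $a$ or be made redundant by $a$). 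Thus point location among these segments solves $RMQ$.

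For the ``effective entropy'' bound I would separate two pieces of information that, together with $S$, determine the whole segment collection: the identity of the redundant points and, for each non-redundant point, the right endpoint $e_p$ (its height and left endpoint are read off $S$). Marking which $r$ of the $n$ points are redundant costs $\log\binom{n}{r}$ bits. For the endpoints I would show that the $e_p$'s, viewed as a ``cut structure'' on the non-redundant points ordered by $x$, can be encoded in $2(n-r)$ bits by a balanced-parenthesis description that generalizes the Cartesian-tree encoding of $1$D range maxima; concretely, I would bound the number of cut structures realizable by \emph{some} priority order (with the given redundant set) by $4^{n-r}$ and index into this set. The decoder then recovers the non-redundant points from $S$ and the marking, recovers each $e_p$ from the $2(n-r)$ bits, and rebuilds every segment; the total is $2(n-r)+\log\binom{n}{r}\le 3n$ bits. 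Setting $r=0$ recovers the familiar $2n$-bit Cartesian-tree bound, which is the promised generalization.

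The main obstacle is exactly this counting/encoding step. Unlike the $1$D case, the cut of $p$ depends on priorities through \emph{both} a geometric condition (weakly below $p_y$, to the right) and a priority comparison, and the resulting cut pairs can \emph{cross} rather than nest, so the structure is not literally a rooted ordered forest; the work is to identify the correct Catalan-type object and prove the $4^{n-r}$ bound (an unconstrained cut function would admit $(n-r)!$ possibilities, so realizability is essential). Finally, for tightness I would give a matching lower bound: placing $n-r$ points on an anti-diagonal turns lower-left dominance queries into contiguous $1$D range-maximum queries, so those points alone realize $\mathrm{Catalan}(n-r)=2^{2(n-r)-o(n)}$ distinct segment collections; augmenting this with a gadget in which the identity of the $r$ redundant points can be forced to be an arbitrary $r$-subset, independently of the $1$D instance, multiplies the count by $\binom{n}{r}$ and shows that $2(n-r)+\log\binom{n}{r}$ bits are necessary in the worst case.
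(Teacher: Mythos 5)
Your reduction to point location (the first half of the lemma) is sound and is essentially the paper's construction: your ``override'' rule defines exactly the lines of influence that the paper builds by a left-to-right plane sweep, and your tightness sketch (anti-diagonal points simulating 1D range maxima, plus a gadget forcing an arbitrary redundant $r$-subset) matches the paper's lower bound, whose gadget is a single extra point $z$ contained in every query that contains a diagonal point, with priority above the $-\infty$ priorities of redundant entries but below all other priorities, so that redundancy of an entry can be tested by a query containing only that entry and $z$. The genuine gap is exactly the step you flagged: the $2(n-r)$-bit encoding of the right endpoints. You propose to bound the number of priority-realizable ``cut structures'' by $4^{n-r}$ and index into that set, but you give no proof of this bound, and as you yourself observe, the naive route fails because cut pairs can cross, so the family is not an obviously Catalan-type (nested/forest) object.

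The missing idea is a sweep invariant that makes the encoding local and trivial, with no counting needed. Sweep a vertical line left to right, maintaining the segments currently alive. The invariant is that the priorities of alive segments are monotone in their $y$-coordinates (in your lower-left orientation, higher alive segments have strictly higher priority); otherwise a query wedged between two inverted segments would be answered incorrectly by the ray shot. Consequently, when a new point arrives, either it is redundant (equivalently, the alive segment adjacent to it on the query's open side has higher priority), or the segments it terminates form a \emph{contiguous run} of alive segments starting at the one adjacent to its own height. So per point one stores only: a redundancy bit (these $n$ bits compress to $\lceil \log {n \choose r}\rceil$), and for each non-redundant point the run length $k \ge 0$ in unary using $k+1$ bits. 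Since each of the $n-r$ segments is terminated at most once, the unary parts total at most $2(n-r)$ bits, and the decoder reconstructs every segment by re-running the sweep on $S$ with these bits. Note this encoding is also the cleanest proof of your $4^{n-r}$ counting claim; your plan, which needs that count in order to define the encoding, is therefore circular as stated---the counting bound and the encoding are the same theorem, and the monotonicity invariant is the substance of both.
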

\begin{proof}
Assume the points are in general position and that the
2-sided query is open to the top and left. Associate 
each point $p = (x(p), y(p)) \in S$ with a horizontal semi-open 
\emph{line of influence}, possibly of length zero,
whose left endpoint (included in the line) is $p$ itself,
and is denoted by $\Inf(p)$, and contains all
points $q$ such that $y(q) = y(p)$, $x(q) \ge x(p)$ and $RMQ(q) = p$.
It can be seen that (see e.g. \cite{DBLP:journals/ipl/MakrisT98}) the answer to
$RMQ(q)$ for any $q \in \mathbb{R}^2$ can be obtained by 
shooting a vertical ray upward from $q$ until 
the first line $\Inf(p)$ is encountered;
the answer to $RMQ(q)$ is then $p$ (if no line is
encountered then there is no point in the 2-sided
region specified by $q$).  See Fig.~\ref{fig:infexample} for an example.

\begin{figure}
\centering
\includegraphics[width=0.45\textwidth]{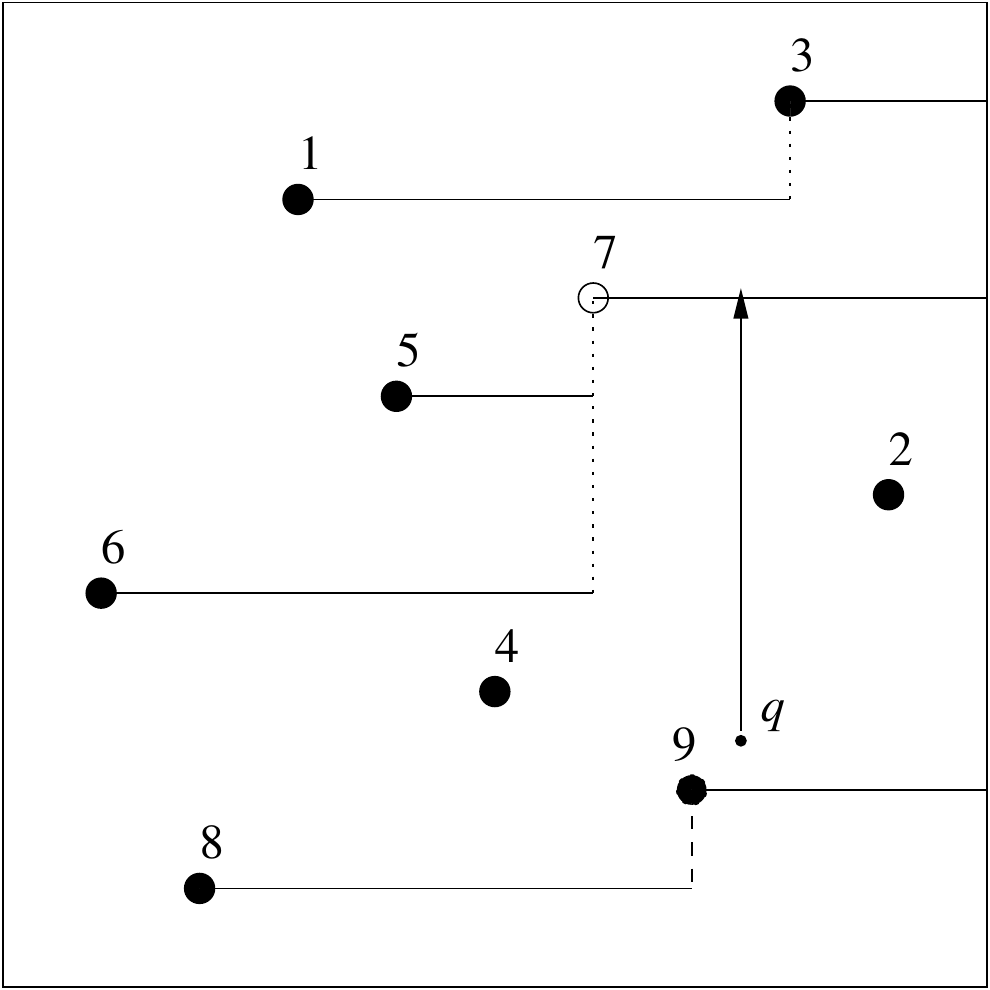}
~~~~
\includegraphics[width=0.45\textwidth]{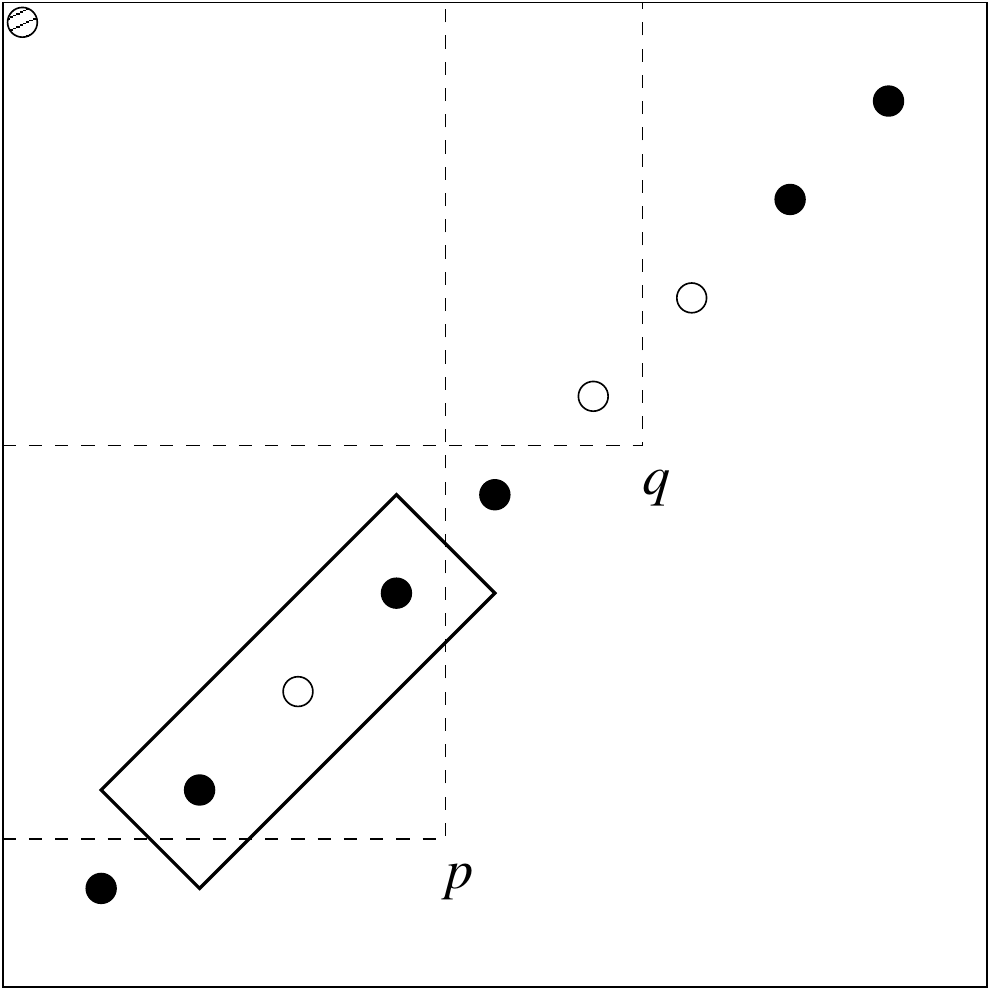}

\caption{(Left) Example for Lemma~\protect{\ref{lem:2sentropy}}.  The horizontal lines are the lines of influence. Vertical dotted lines show where a point has terminated the line of influence of another point. The arrow shows how point location in the lines of influence answers the 2-sided query with lower right hand corner at $q$, returning the point with priority 7. (Right) Example showing tightness of space bound --- points along the diagonal are $A_\ell$ and the queries at $p$ and $q$ illustrate how 1D RMQ queries can be answered and redundancy of elements tested, respecitvely.}
\label{fig:infexample}
\end{figure}

The set 
$\Inf(S) = \{\Inf(p) | p \in S\}$ can be computed by sweeping 
a vertical line from left to right.  At any given position
$x = t$ of the sweep line, the sweep line will intersect 
$\Inf(S')$ for some set $S'$ (initially $S' = \emptyset$).
If $S' = p_{i_1},\ldots,p_{i_r}$ such that
$y(p_{i_1}) > \ldots > y(p_{i_r})$ then it follows
that $\pi(i_1) < \ldots < \pi(i_r)$ (the current lines
of influence taken from top to bottom represent points
with increasing priorities).  Upon reaching the next
point $p_s$ such that $y(p_{i_j}) < y(p_s) < y(p_{i_{j+1}})$,
either (i) $\pi(s) < \pi(i_j)$---in this case
$\Inf({p_s})$ is empty---or (ii) $\pi(k) > \pi(i_j)$.
In the latter case, it may be that $\pi(k) > \pi(i_{j+1}), \ldots
\pi(i_{j+k})$ for some $k \ge 0$, which would mean that
$\Inf(p_{i_{j+1}}),\ldots,\Inf(p_{i_{j+k}})$
are terminated, with their right endpoints being $x(p_s)$.
To construct $\Inf(S)$, therefore,
only $O(n)$ bits of information are needed: for each point,
one bit is needed to indicate whether case (i) or (ii) holds,
and in the latter case, the value of $k$ needs to be stored.
However, $k$ can be stored in unary using $k+1$ bits, and
the total value of $k$, over the course of the entire sweep,
is at most $(n-r)$, giving a total of at most $2(n-r)$ bits.
The bit-string that indicates whether case (i) or (ii) holds
can be stored in 
$\left \lceil \log {{n} \choose {n-r}}\right \rceil \le n$ bits.

To show that this is tight, 
we consider the \emph{1D range maximum with redundant entries} problem,
defined as follows.  Given an array $A$ of size $n$, of which $r$ entries are
redundant, answer the following two queries:  firstly,
given an index $i$, state whether $A[i]$ is redundant, and 
secondly, given indices $1 \le i \le j \le n$, return
the index of the largest value in a query interval $A[i],\ldots,A[j]$
(ignoring redundant values).  Given $n, r$, it is easy to see that
$2 (n-r) + \log {{n} \choose {r}} - O(\log n)$ bits are required to
encode the answers to the above queries, namely, to answer these
queries without accessing $A$.  This is because there are
${{n}\choose{r}}$ choices for the positions of the redundant
elements, and for each choice of the positions of the redundant
elements there are $C_{n-r} = \frac{1}{n-r+1}{{2(n-r)}\choose{n-r}}$  partial 
orders among the values in $A$ that can be distinguished by 1D range maximum queries \cite{DBLP:journals/siamcomp/FischerH11,Vuillemin1980}.  
We associate the values in $A$ with
a set of $n$ points $A_\ell$ placed on a slanted line 
(in increasing $x$-coordinate order), and give each point in $A_\ell$
a priority equal to the corresponding entry in $A$.  
The point associated with each redundant value in $A$ is given a priority of
$-\infty$.  In addition we place an $n+1$st point $z$ that dominates
all of the points in $A_\ell$ (and hence is included in any query that also
includes a point from $A_\ell$), whose priority is greater than $-\infty$ but
less than the smallest priority associated with a non-redundant entry
of $A$ (see Fig.~\ref{fig:infexample}(R)). The 1D range maximum 
query problem with redundant entries can can be solved via 2-sided 2D range 
maximum queries that include only $z$ and the appropriate sub-range of
$A_\ell$. Also, we can determine if an entry in $A$ is redundant by making
a 2-sided query that includes only the corresponding point from $A_\ell$
and $z$; the point is redundant iff the answer to such a query is
$z$. \qed{} \end{proof}

\begin{remark} It can be shown that 
$2(n-r) + \log {{n} \choose {r}} \le n \log 5 + o(n)$, which is
at most $2.33 n$ bits for large $n$.
\end{remark}

\subsection{Data structures for 2-sided queries}
\label{sec:2sided}

In this section we show the following lemma:
\begin{lemma}
\label{lem:2sfinal}
Given a recursive sub-problem of size $n$, we can answer 2-sided queries on this problem
in $O(\log N)$ time using $O(n \sqrt{\log n})$ bits of space.
\end{lemma}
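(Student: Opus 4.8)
The plan is to combine the reduction of Lemma~\ref{lem:2sentropy} with a two-level point-location structure that never stores point coordinates explicitly, recovering them on demand via the slab-select primitive of Lemma~\ref{lem:slab-rank-select}. First I would apply Lemma~\ref{lem:2sentropy} to replace the 2-sided query at $q$ by an upward vertical ray-shooting query against the at most $n$ horizontal lines of influence $\Inf(S)$, and store the $O(n)$-bit reconstruction information so that any contiguous (in the left-to-right sweep) slice of the lines of influence can be regenerated. Since the sub-problem has size $n$ but lives in the top-level coordinate system, I would \emph{not} store the $\Theta(n\log n)$ bits of coordinates; instead, whenever the actual coordinates of a small candidate set are needed, I would fetch them with slab-select, which returns up to $O(\sqrt{\log N})$ points in $O(\log N/\log\log N)$ time, and convert line owners back to top-level points.

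Next I would impose a two-level decomposition to obtain fast location within the space budget. Partition the $n$ columns into $n/b$ consecutive vertical blocks of $b=\sqrt{\log n}$ columns each. Each block stores only the $O(b)$-bit local slice of the effective-entropy encoding, so $O(n)$ bits across all blocks, while a global navigation structure over the $n/b$ block boundaries is allotted $O(\log n)$ bits per boundary, for $O((n/b)\log n)=O(n\sqrt{\log n})$ bits overall, which is exactly the target. A query point $q=(x_q,y_q)$ lies in a unique block $B_j$, and the first line of influence met by the upward ray from $q$ is either (a) a line \emph{born} inside $B_j$ at a column $\le x_q$, or (b) a line already \emph{active} when the sweep enters $B_j$ that survives past column $x_q$. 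For case (a), I would decompress $B_j$ from its $O(b)$-bit slice together with the state handed to $B_j$ by the navigation structure, obtaining $O(b)$ candidate lines; since $b\le\sqrt{\log N}$, a single slab-select call yields their coordinates and the lowest candidate above $y_q$ is found directly in $O(b)$ time. Summing the global location, one slab-select at $O(\log N/\log\log N)$, and $O(b)=O(\sqrt{\log n})$ local work gives $O(\log N)$ per query, as required.

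The hard part is case (b), and more generally reconciling the sweep-ordered (by $x$) compressed encoding with a query whose natural order is by $y$: a blocking line of influence may be born many blocks to the left of $B_j$, so the navigation structure must, within its $O(\log n)$-bit-per-boundary budget, summarize the active ``skyline'' crossing each boundary finely enough to (i) identify the single surviving active line that first blocks the ray and (ii) supply exactly the state needed to decompress $B_j$ locally, all without ever materializing an active set that could contain $\Theta(n)$ lines. Here I expect to exploit the structure exposed in the proof of Lemma~\ref{lem:2sentropy} that terminations always remove a $y$-contiguous group of active lines, so the skyline evolves by replacing contiguous ranges; the task is to encode just enough of this evolution at block boundaries to prune the first-hit search to $O(b)$ candidates. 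Proving that such a summary is simultaneously small enough to respect the $O(n\sqrt{\log n})$-bit bound and informative enough to localize the answer to one slab-select's worth of points is the technical heart of the lemma.
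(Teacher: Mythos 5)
Your ingredients are the right ones---the reduction of Lemma~\ref{lem:2sentropy}, refusing to store coordinates and recovering them on demand via slab-select (Lemma~\ref{lem:slab-rank-select}), and a coarse/fine decomposition in which the coarse level gets $O(\log n)$ bits per unit and the fine level $O(\sqrt{\log n})$ bits---but the decomposition you chose creates a hole that you flag and do not close, and that hole is the entire difficulty of the lemma. A one-dimensional partition of the columns into blocks of $b=\sqrt{\log n}$ consecutive columns gives no control over how many lines of influence cross a block boundary: this number can be $\Theta(n)$ (e.g.\ points placed on a descending diagonal with priorities increasing downward, so that no line ever terminates). Your ``navigation structure'' would then have to let a query in block $B_j$ identify, among $\Theta(n)$ active lines whose owners and $y$-coordinates it cannot store, the lowest one above $q$ that survives past $x_q$; $O(\log n)$ bits per boundary cannot encode this, and even a query-time binary search over the active set would need $\Omega(\log n)$ slab-selects (at $O(\log N/\log\log N)$ each), exceeding the $O(\log N)$ budget---and would anyway require knowing \emph{where} each probed line's owner lies before slab-select can be invoked. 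So your case (b) is not a proof sketch with a routine step missing; it is the unsolved core.

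The paper escapes this by making the coarse level two-dimensional rather than column-based. It selects a subset $T'\subseteq T$ of $O(n/\lambda)$ points, $\lambda=\Theta(\sqrt{\log n})$, such that the \emph{vertical decomposition} induced by $\mathit{Inf}(T')$ partitions the plane into $O(n/\lambda)$ rectangular regions each containing only $O(\lambda)$ points and segment pieces of $\mathit{Inf}(T)$; such a $T'$ exists by a plane-sweep argument \cite{BCR02,CP09}. A standard point-location structure on $\mathit{Inf}(T')$ (the skeleton) costs $O((n/\lambda)\log n)=O(n\sqrt{\log n})$ bits and, by construction, guarantees the bounded crossing number that your column blocks lack: only $O(\lambda)$ segments enter any region from the left. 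The second ingredient you are missing is the machinery to recover the coordinates of a crossing segment whose owner lies far away: the paper stores per-region bit-strings recording how segments pass between adjacent regions, plus, for any segment crossing $m\ge\lambda$ regions, an explicit pointer to its owner in every $\lambda$-th crossed region. A query then binary-searches the $O(\lambda)$ left-crossing candidates; each probe traces the segment back through at most $\lambda$ regions in $O(1)$ time per region, does one slab-select at the owner's region, and compares $y$-coordinates, giving $O(\log\lambda)\cdot O\bigl(\sqrt{\log n}+\log N/\log\log N\bigr)=O(\log N)$ overall. Both the existence of the selection $T'$ and this trace-back mechanism would have to be added to your argument before it becomes a proof.
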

This lemma assumes and builds upon the four ``global'' data structures
metioned after the statement to Theorem~\ref{thm:4sfinal}. 
We view the given problem of size $n$, on which we need to support 2-sided queries,
as point location in a collection of $O(n)$ horizontal line segments as in Lemma~\ref{lem:2sentropy}, but with a limited space budget of $O(n \sqrt{\log n})$ bits.
We therefore need to devise
an implicit representation of these problems. We begin with an overview of the process.
Let $T$ be the set of $n$ points in the
sub-problem we are considering. We start with an explicit representation
of $\Inf(T)$ and choose a parameter
$\lambda = \thetah{\sqrt{\log n}}$.  We select $O(n/\lambda)$
lines of influence that partition the plane into 
rectangular \emph{regions} with $O(\lambda)$ points 
(from $T$) and parts of line segments (from $\Inf(T)$) \cite{BCR02,CP09} and store 
a standard point location data structure on the selected lines of influence.
This data structure, called the \emph{skeleton}, requires 
$O((n/\lambda) \log n) = \oh{n \sqrt{\log n}}$ bits.  
Furthermore, we store $O(\lambda)$
bits of information with each region (including the
$O(\lambda)$-bit encoding of priority information from
Lemma~\ref{lem:2sentropy}).

The query proceeds as follows.  Given a query point $q$, we first perform
a point location query on the skeleton to determine the region $R$ 
in which $q$ lies.  We now need to reconstruct the original point
location structure within $R$, and perform a slab-select to determine
the points of $T$ that lie within this region. This, together with
the priority information, allows us to partially---but not fully, since
lines of influence may originate from outside $R$---reconstruct the point location structure within $R$.  To handle
lines of influence starting outside $R$, we do a binary search with
$O(\log \lambda)$ steps, where in each step we need to perform a 
slab-select, giving the claimed bound. The details are as follows.
%

\subsubsection*{Preprocessing.}
As noted above, we first create $\Inf(T)$, take $\lambda = \sqrt{\log n}$ and select a set of
points $T' \subseteq T$ with the following properties: (a) $|T'| = O(n/\lambda)$;
(b) the \emph{vertical decomposition}, whereby we shoot vertical rays upward and 
downward from each endpoint of each segment in $\Inf(T')$ until they hit another segment 
(see Fig.~\ref{fig:verticaldecomp}), of the plane
induced by $\Inf(T')$\footnote{Note that the extent of a line segment in $\Inf(T')$ is defined, as originally, wrt points in $T$, and not wrt the points in $T'$.} decomposes the plane into
$O(n/\lambda)$ rectangular regions each of which has at most
$O(\lambda)$ points from $T$ and parts of line segments 
from $\Inf(T)$ in it.
$T'$ always exists and can be found by plane sweep
\cite[Section 3]{BCR02},\cite[Section 4.3]{CP09}.
The skeleton is any standard point location data structure on $\Inf(T')$.

Let $R$ be any region, and let
$\Left(R)$ ($\Right(R)$) be the set of line segments from
$\Inf(T)$ that intersect the left (right) boundaries of
$R$, and let $P(R)$ be the set of points from
$T$ in $R$.  We store the following bit strings for $R$:

\begin{enumerate}

\item For each line segment $\ell \in \Left(R)$, ordered top-to-down
by $y$-axis, a bit that indicates whether the right endpoint
of $\ell$ is in $R$ or not; similarly for $\ell \in \Right(R)$, 
a bit indicating whether $\ell$ begins in $R$ or not.

\item If the left boundary of $R$ is adjacent to other regions
$R_1,R_2,\ldots$ (taken from top to bottom) and $l_i \ge 0$ 
represents the number of line segments from $\Left(R)$ that
also intersect $R_i$, then we store a bit-string
$0^{l_1}10^{l_2}1\ldots$.  A similar bit-string is stored for
the right boundary of $R$.

\item For each point in $P(R)$ and each line segment in 
$\Left(R)$, a bit-string of length $|P(R)| + |\Left(R)|$ 
whose $i$-th bit indicates 
whether the $i$-th largest $y$-coordinate in $P(R) \cup \Left(R)$ 
is from $P(R)$ or $L(R)$.
\end{enumerate}
The purpose of (1) and (2) is to trace a line segment as it crosses multiple regions: if a line segment crosses from a region $R'$ to a region
$R''$ on its right, then given its position in $\Right(R')$, we
can deduce its position in $\Left(R'')$ and vice-versa. However, there is no
useful bound on the number of regions a single line segment
may cross, so we store the following information:

\begin{enumerate}
\setcounter{enumi}{3}
\item Suppose that a line segment $\ell = \Inf(p)$ for some $p \in T$ 
crosses $m \ge \lambda$ regions. Then, in every $\lambda$th region 
that $\ell$ crosses, we explicitly store the region containing $p$, 
and $p$'s local coordinates.  As in (1), for each region $R$, we
store one bit for each $\ell\in \Right(R)$, $\ell = \Inf(p)$ , indicating whether
or not $R$ holds information about $p$.
%
\item Finally, for each point $p \in P(R)$, we store the sequence of
bits from Lemma~\ref{lem:2sentropy}, which indicates whether $p$ has a non-empty 
$\Inf(p)$ and if so, for how many lines from $\Left(R) \cup \Inf(P(R))$, $p$ is 
a right endpoint ($p$ cannot be a right endpoint of
any other line in $\Inf(T)$, by the construction of the skeleton).
\end{enumerate}

As noted previously, the skeleton takes $O(n \sqrt{\log n})$ bits, within our
budget.
We now add up the space required for (1)-(5). By construction, the sum
of $|\Left(R)|$, $|\Right(R)|$ and $|P(R)|$ is $O(n)$ summed over all regions $R$.
The space bound for (1) and (3) is therefore $O(n)$ bits. 
The number of 1s in the bit string of (3), summed over all regions,
 is $O(n/\lambda)$, as there are
$O(n/\lambda)$ regions and the graph which indicates adjacency of regions is planar;
the number of $0$s is $O(n)$ as before.  The space used by (4) is $O(n \sqrt{\log n})$ bits again,
as for every $O(\sqrt{\log n})$ portions of line segments in the regions we store $O(\log n)$ bits.
Finally, the space used for (5) is $O(n)$ bits by Lemma~\ref{lem:2sentropy}.

\subsubsection*{Query algorithm.}
Suppose that we are given a query point $q$ in a sub-problem of size $n$ 
and need to answer $RMQ(q)$ (assume that we have $q$'s local and top-level
coordinates).  The query algorithm proceeds as follows:

\begin{enumerate}[label=(\alph*)]
\item Do a planar point location in the skeleton, and find a region $R$ in which 
the point $q$ lies.  Perform slab-select on $R$ to get $P(R)$.

\item As we know how many segments from
$\Left(R)$ lie vertically between any pair of points in $P(R)$, when we are given
the data in (5) above, we are able to determine whether the $x$-coordinate of
a given  point $p$ in $P(R)$ is the right endpoint of a line from either $\Left(R)$ or 
$\Inf(P(R))$. Thus, we have enough information to determine $\Inf(p)$ for all $p \in P(R)$ 
(at least until the right boundary of $R$).   
Furthermore, for each line in $\Left(R)$ that terminates in
$R$, we also know (the top-level coordinates of) its right endpoint.

\item Using the top-level coordinates of $q$, we determine
the nearest segment from $\Inf(P(R))$ that is above $q$.

\item  Using the top-level coordinates of $q$ we also find the set of segments from
$\Left(R)$ whose right endpoints are not to the left of $q$.  Let this set be 
$\Left^*(R)$. We now determine
the nearest segment from $\Left^*(R)$ that is above $q$. Unfortunately, although
$|\Left^*(R)| = O(\lambda)$, since the segments in $\Left^*(R)$ originate in points
outside $R$, we do \emph{not} have their $y$-coordinates.  Hence, we need to perform
the following binary search on $\Left^*(R)$:
\begin{enumerate}[label=(\alph{enumi}\arabic*)]
\item Take the line segment $\ell \in \Left^*(R)$ with median $y$-coordinate,
and suppose that $\ell = \Inf(p)$.
The first task is to find the region $R_p$ containing $p$, as follows. 
Use (2) to determine which of the adjacent regions
of $R$ $\ell$ intersects, say this is $R'$.  If $\ell$ ends in $R'$, 
or $R' = R_p$ and we are done. Otherwise, use (1) to locate
$\ell$ in $\Left(R')$ and continue.

\item Once we have found $R_p$, we perform a slab-select on $R'$ to  
determine $P(R_p)$, and sort $P(R_p)$ by $y$-axis.  Then we perform (c) above on
$P(R_p)$, thus determining which points of $P(R_p)$ have lines of influence that
reach the right boundary of $R_p$.  Using this we can now determine the (top-level)
coordinates of $p$.

\item We compare the top-level $y$-coordinates of $p$ and $q$ and recurse.
\end{enumerate}

\item We take the lower of the lines found in (d) and (e) and use it to return a candidate.
Observe that we have the top-level coordinates of this candidate.
\end{enumerate}

We now derive the time complexity of a 2-sided query.  
Step (a) takes $O(\log n)$ for the point location, and
$O(\log N/\log \log N)$ for the slab-select. 
Step (b) can be done in $O(\log n) = O(\log N)$ time by running the plane sweep algorithm
of Lemma~\ref{lem:2sentropy} (recall that $|P(R)| = O(\sqrt{\log n})$---a 
quadratic algorithm will suffice).  Step (c) likewise
can be done by a simple plane sweep in $O(\log n)$ time.  Step (d1) is
iterated at most $O(\sqrt{\log n})$ times before $R_p$ is found
since every $\lambda$-th region intersected by $\ell$ contains information
about $p$.  Each iteration of (d1) takes $O(1)$ time: operations on the bit-strings are
done either by table lookup if the bit-string is short ($O(\lambda)$ bits), 
or else using rank and select operations \cite{DBLP:reference/algo/RahmanR08}, 
if the bit string is long (as e.g. the bit-string in (2) may be) -- these
entirely standard tricks are not described in detail.  Step (d2) takes
$O(\log N / \log \log N)$ time as before.  Steps (d1)-(d3) are performed
$O(\log \lambda) = O(\log \log N)$ times, so this takes $O(\log N)$ time overall.
Step (e) is trivial.  We have thus shown
Lemma~\ref{lem:2sfinal}.

\subsection{Putting things together}
\label{sec:combine-conclude}

As noted in Section~\ref{sec:space}, the space usage of our data structure is $O(N)$ words.
Coming to the running time, we solve $O(\log \log N)$ 2-sided queries using
Lemma~\ref{lem:2sfinal}, giving a time of $O(\log N \log \log N)$.  The 
$O(\log \log N)$ square-aligned queries are solved in $O(1)$ time each.  
The $O(1)$ terminal problems at the bottom of the recursion
are solved using Chazelle's algorithm in $O((\log \log N)^2)$ time. 
Any candidate given in local coordinates is converted to top-level coordinates
in $O(\log N/\log \log N)$ time,
or $O(\log N)$ time overall. We simply sequentially scan all $O(\log \log N)$ candidates
to find the answer. This proves Theorem~\ref{thm:4sfinal}.

\section{A succinct index for 2-sided queries}

In this section, we give a {succinct index} for 2-sided range maxima queries
over $N$ points in rank space. This is in essence a stand-alone variant of Lemma~\ref{lem:2sfinal} and reuses its
structure. As noted earlier, we consider the case where the point coordinates are
stored ``elsewhere'' and are assumed to be accessible 
in one of two ways, repeated here for convenience:

\begin{itemize}
\item Through an orthogonal range reporting query.  Here, we assume that a query
that results in $k$ points being reported takes $T(N,k)$ time.  We assume that
$T$ is such that $T(N, O(k)) = O(T(N,k))$.
\item The permuted-point model of \cite{Boseetal09}, where we assume that the point coordinates are stored
in read-only memory, permitting random access to the coordinates of the $i$-th point.
However, the ordering of the points is specified by the data structure.
\end{itemize}
Note that the priority information, unlike the coordinates, is encoded within the index. 

\subsection{Succinct index in the orthogonal range reporting model}

\begin{lemma}
\label{lem:index}
Let $\lambda \ge 2$ be some parameter.  There is a succinct index of size
$O(N + (N \log N)/\lambda))$ bits such that $RMQ(q)$ queries can 
be answered in $O(\log N + \log \lambda (\lambda + T(N, \lambda))$ time.
\end{lemma}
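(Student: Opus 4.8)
The plan is to reuse, essentially verbatim, the skeleton-and-regions construction underlying Lemma~\ref{lem:2sfinal}, making just two changes: the partition parameter $\lambda$ is now a free parameter rather than being fixed to $\sqrt{\log n}$, and wherever the recursive structure invoked slab-select to recover point coordinates we instead issue an orthogonal range reporting query. First I would view the $N$ points as the point-location-in-lines-of-influence instance of Lemma~\ref{lem:2sentropy}, and select a set $T' \subseteq S$ with $|T'| = O(N/\lambda)$ whose vertical decomposition of $\Inf(T')$ cuts the plane into $O(N/\lambda)$ rectangular regions, each containing $O(\lambda)$ points of $S$ and $O(\lambda)$ pieces of segments from $\Inf(S)$. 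The skeleton is a standard point-location structure on $\Inf(T')$; crucially, since the index is not charged for the data but must itself locate regions, I would store the real coordinates of the $O(N/\lambda)$ selected points explicitly inside the skeleton, costing $O((N/\lambda)\log N)$ bits. For each region I would then store exactly the bit-strings (1)--(5) of Lemma~\ref{lem:2sfinal}, including the $O(\lambda)$-bit priority encoding of Lemma~\ref{lem:2sentropy}.

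For the space bound, the bit-strings (1), (2), (3) and (5) sum to $O(N)$ bits over all regions, exactly as in Lemma~\ref{lem:2sfinal}, since the total number of (segment, region) incidences and the total size of the sets $P(R)$ are both $O(N)$. The skeleton, together with the information of type (4)---which stores $O(\log N)$ bits once every $\lambda$ regions that a segment crosses, and hence $O((N/\lambda)\log N)$ bits in total since the total number of crossings is $O(N)$---contributes $O((N\log N)/\lambda)$ bits. Summing yields the claimed $O(N + (N\log N)/\lambda)$ bits.

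The query follows steps (a)--(e) of Lemma~\ref{lem:2sfinal}. Point location in the skeleton costs $O(\log N)$ and yields a region $R$ together with the real coordinates of its bounding box, read off from the explicitly stored selected points. Wherever Lemma~\ref{lem:2sfinal} performed a slab-select to obtain $P(R)$ (or $P(R_p)$), I instead issue an orthogonal range reporting query on the corresponding box; since each region holds $O(\lambda)$ points, this costs $T(N, O(\lambda)) = O(T(N,\lambda))$ by the assumption $T(N,O(k)) = O(T(N,k))$, plus $O(\lambda)$ work to process those points and run the plane sweep of Lemma~\ref{lem:2sentropy} that reconstructs the lines of influence inside the region from the priority bits (5). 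Handling the segments of $\Left^*(R)$ whose left endpoints lie outside $R$ uses the same tracing machinery via the bit-strings (1), (2) and (4): the target region $R_p$ is found after $O(\lambda)$ constant-time steps on the bit-strings (table lookup or rank/select), and the binary search over $\Left^*(R)$ runs for $O(\log\lambda)$ rounds. Each round therefore costs $O(\lambda + T(N,\lambda))$, giving a total of $O(\log N + \log\lambda\,(\lambda + T(N,\lambda)))$, as claimed.

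The step I expect to require the most care is the interface between region boundaries and the reporting oracle. In the recursive data structure the coordinate translation was supplied ``for free'' by slab-select, whereas here every reporting query must be phrased directly in the real coordinate system; this is precisely why the skeleton must retain the actual coordinates of the $O(N/\lambda)$ selected points (so that a located region's box is known exactly), and why it is essential that each region, and each stretch of a segment between two consecutive sampled crossings recorded in (4), spans only $O(\lambda)$ points---otherwise a reporting query could return too many points and break the per-round cost. The crux is therefore to verify that the selection guarantees of \cite{BCR02,CP09} still produce $O(N/\lambda)$ regions of size $O(\lambda)$ for an arbitrary $\lambda \ge 2$ (not only $\lambda = \sqrt{\log n}$), and that reconstructing $\Inf(p)$ for a $p$ lying outside $R$ never requires more than the sampled coordinates stored in (4).
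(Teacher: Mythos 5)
Your proposal is correct and follows essentially the same route as the paper's own proof: the paper likewise instantiates the skeleton-and-regions construction of Lemma~\ref{lem:2sfinal} with $\lambda$ as a free parameter, lets the local/top-level coordinate distinction vanish, and substitutes the orthogonal range reporting oracle for every slab-select, yielding the same space split ($O(N)$ bits for bit-strings (1)--(3),(5) versus $O((N\log N)/\lambda)$ for the skeleton and (4)) and the same per-round cost $O(\lambda + T(N,\lambda))$ over $O(\log\lambda)$ binary-search rounds. Your added care about storing the selected points' coordinates in the skeleton and about the $O(\lambda)$-step tracing via (4) matches what the paper leaves implicit in its sketch.
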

\begin{proof}
The proof follows closely the proof of Lemma~\ref{lem:2sfinal}, except that the
distinction between local and top-level coordinates vanishes, and the slab-select
operation is replaced by the assumed orthogonal range reporting query.  The space
complexity of the skeleton and associated bit-strings is exactly as in Lemma~\ref{lem:2sfinal}.
For the time complexity, the planar point location to find the region $R$
containing the query point $q$ is $O(\log N)$ time.  Finding $P(R)$ takes
$T(N, \lambda)$ time, and each of the $O(\log \lambda)$ iterations of the binary search
takes $O(\lambda + T(N,\lambda) )$ time.  \qed{}
\end{proof}
Choosing $\lambda = \log N$ in the above, we get:
\begin{corollary}
There is a succinct index of $O(N)$ bits such that $RMQ(q)$ queries can be answered in 
$O(\log \log N \cdot (\log N  + T(N,\log N)))$ time. 
\end{corollary}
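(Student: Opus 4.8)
The plan is to obtain this Corollary as an immediate specialization of Lemma~\ref{lem:index}, instantiating the free parameter as $\lambda = \log N$. No new data structure or construction is needed; the entire argument reduces to substituting this value into the space and time bounds of Lemma~\ref{lem:index} and then performing one elementary asymptotic simplification.

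First I would plug $\lambda = \log N$ into the space bound $O(N + (N \log N)/\lambda)$ of Lemma~\ref{lem:index}. The second term becomes $O((N \log N)/\log N) = O(N)$, so the index occupies $O(N)$ bits in total, matching the claim. Next I would substitute the same value into the query-time bound $O(\log N + \log \lambda \cdot (\lambda + T(N, \lambda)))$, which yields $O(\log N + \log \log N \cdot (\log N + T(N, \log N)))$. Here I rely on $\log \lambda = \log \log N$ and on the assumption $T(N, O(k)) = O(T(N,k))$ already carried by Lemma~\ref{lem:index}, so no extra hypothesis on $T$ is required.

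The final step is to simplify the time expression to the stated form. Since $\log \log N = \Omega(1)$ for all sufficiently large $N$, the standalone additive term $\log N$ (which originates from the single planar point-location step in the proof of Lemma~\ref{lem:index}) is dominated by the product $\log \log N \cdot \log N$ and can be absorbed into it, leaving $O(\log \log N \cdot (\log N + T(N, \log N)))$. The only point requiring even mild care---and hence the nearest thing to an obstacle---is verifying precisely this absorption, namely confirming that the leading $\log N$ contributes no separate asymptotic summand; given $\log \log N \ge 1$ this is immediate, and the Corollary follows.
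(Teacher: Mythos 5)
Your proposal is correct and is exactly the paper's argument: the paper derives this corollary by the one-line instantiation $\lambda = \log N$ in Lemma~\ref{lem:index}, with the same substitution into the space and time bounds and the same absorption of the additive $\log N$ term into the product $\log \log N \cdot \log N$. Nothing further is needed.
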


\subsection{Succinct index in the permuted-point model}
 
\begin{lemma}
There is a succinct index of $N \log 5 + o(N) = 2.33N + o(N)$ bits for answering $RMQ(q)$ queries on
a set of $N$ points in $O(\log \log N)$ time in the permuted-point model.
\end{lemma}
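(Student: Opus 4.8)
The plan is to reuse the skeleton-and-regions machinery of Lemma~\ref{lem:2sfinal} essentially verbatim, exploiting the permuted-point model to discharge the two costs that dominated there. First, since we are free to fix the order in which the coordinates sit in read-only memory, I would lay the points out \emph{region by region}, so that the set $P(R)$ of points of any region $R$ occupies a contiguous block; then the slab-select of Lemma~\ref{lem:2sfinal} is replaced by an $O(|P(R)|)$-time sequential read, and the distinction between local and top-level coordinates disappears exactly as in Lemma~\ref{lem:index}. Second, for the space I would store the priority information of each region using the \emph{tight} encoding of Lemma~\ref{lem:2sentropy}; summed over all regions this is $2(n-r)+\log\binom{n}{r}$ bits, which by the remark following that lemma is $N\log 5 + o(N)$ bits. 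Everything else --- the skeleton together with the bit-strings (1)--(4) of Lemma~\ref{lem:2sfinal} --- must therefore be squeezed into the remaining $o(N)$ bits.

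For the query time the only obstacle is the planar point location that finds the region $R$ containing $q$, which cost $O(\log N)$ in both Lemma~\ref{lem:2sfinal} and Lemma~\ref{lem:index}; everything after $R$ is read directly is already cheap. The observation I would exploit is that locating $q$ in the skeleton is \emph{itself} a $2$-sided range-maximum query: point location in $\Inf(T')$ returns exactly $RMQ_{T'}(q)$, the answer restricted to the selected points $T'$. I would therefore build the skeleton's point-location structure recursively as a smaller instance of the very same index, on the $O(n/\lambda)$ points of $T'$ (whose coordinates are again available by random access). Bottoming the recursion out after $O(\log\log N)$ levels --- by letting the selected set shrink fast enough, e.g.\ $|T'|=\Theta(\sqrt n)$ per level --- and doing the per-level work on the $O(\lambda)$-sized current region by direct reads and word-packed table lookups, I would aim for $O(\log\log N)$ point-location steps and hence $O(\log\log N)$ total query time.

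The hard part is reconciling the $o(N)$-bit navigation budget with the $O(\log\log N)$ time bound, and in particular avoiding a branching recursion. Storing skeleton segments with explicit $\Theta(\log N)$-bit coordinates forces $\lambda=\omega(\log N)$ to keep the skeleton $o(N)$ bits, but a region of $\omega(\log N)$ points cannot be reconstructed (plane-sweeping its lines of influence as in steps (b)--(d) of Lemma~\ref{lem:2sfinal}) within the time budget; conversely, small regions make the skeleton too large if segments are stored explicitly. I expect this to be resolved by using the permuted-point model a second time: rather than storing coordinates in the skeleton, store each selected segment as an $O(\log\log N)$-bit reference (a rank into the permuted coordinate array) and recover its coordinates on demand, so that a skeleton producing regions as small as $\mathrm{polyloglog}\,N$ still fits in $o(N)$ bits. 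One must then check that the two recursive obligations --- locating the region through the sub-skeleton, and answering \emph{within} the region --- do not both incur a full recursive call, which would give $T(n)=2T(\sqrt n)+O(1)=O(\log n)$: the within-region answer has to be read off the \emph{same} downward point-location path, processing only the $O(1)$-sized finest cell, rather than being launched as an independent subquery. Verifying that these pieces simultaneously meet the $N\log 5 + o(N)$-bit and $O(\log\log N)$-time targets is the crux of the argument.
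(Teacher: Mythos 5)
Your space accounting is essentially the paper's: priorities encoded via Lemma~\ref{lem:2sentropy} give the $N\log 5+o(N)$ leading term, the points are permuted in alignment with the decomposition so each region's points can be read directly, and all navigation structures must fit in $o(N)$ bits. The genuine gap is on the time side, and it is exactly the point you yourself flag as ``the crux'': your self-bootstrapping recursion cannot deliver $O(\log\log N)$ time. With $|T'|=\Theta(\sqrt n)$, each region of the vertical decomposition of $\Inf(T')$ contains $\Theta(\sqrt n)$ points of $S$, and reconstructing the lines of influence inside a region (steps (b)--(d) of Lemma~\ref{lem:2sfinal}) requires reading and sweeping all of them, so the work at the top level alone is $\Theta(\sqrt N)$, not $O(1)$ plus a recursive call. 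Worse, the recursion necessarily branches: the sub-index on $T'$ stores information only about $T'$, so after it returns the region $R$ you still face a fresh RMQ instance on the disjoint point set $S\cap R$ (plus the segments of $\Left(R)$). That answer cannot be ``read off the same downward point-location path,'' because no node of the recursion on $T'$ knows anything about the points of $S\setminus T'$; this is precisely the $T(n)=2T(\sqrt n)+\cdots=\Omega(\log n)$ behaviour you wanted to avoid. Your fallback parameterization (regions of size $\mathrm{polyloglog}\,N$, skeleton segments stored as $O(\log\log N)$-bit references) is inconsistent with the $\sqrt n$ shrinkage and merely reproduces the original difficulty: you would still need point location among roughly $N/\mathrm{polyloglog}\,N$ segments in $O(\log\log N)$ time and $o(N)$ bits, which is the problem you set out to solve.

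The paper closes this gap with two ingredients your proposal lacks. First, instead of recursing the index into itself, it applies the planar separator theorem to the dual graph of the vertical decomposition of $\Inf(S)$ (following \cite{Boseetal09,CP09}), obtaining a multi-level decomposition: a top level with $\lambda=(\log N)^2$, a second level with $\lambda'=(\log\log N)^2$, and a third level with components of size $O((\log\log\log N)^3)$; separator-cell boundaries are described by $O(\log N)$ bits at the top level and $O(\log\log N)$ bits below (by referring either to points in the same component or to enclosing top-level separator cells), keeping all of this in $o(N)$ bits. Second---and this is what actually buys the $O(\log\log N)$ query time---the top-level point location is done by an off-the-shelf $O(\log\log N)$-time structure (Chan \cite{DBLP:conf/soda/Chan11}) built on the sparse separator set; since that set has only $O(N/(\log N)^2)$ cells, even explicit word-sized descriptions cost $o(N)$ bits. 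Descending through the lower levels then only ever requires decompressing components of polyloglog size by direct reads, which takes $o(\log\log N)$ time. Without a replacement for these two ingredients, your proposal does not establish the stated time bound.
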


\begin{proof} (sketch)
Let $S$ be the set of input points. We again solve $RMQ(q)$ queries by
planar point location on $\Inf(S)$.  
We consider the regions of the vertical decomposition
induced by $\Inf(S)$ as nodes in a planar graph (the dual graph of the
set of regions), with edges between adjacent cells
(see Fig~\ref{fig:verticaldecomp}).
\begin{figure}
\centering
\includegraphics[width=0.85\textwidth]{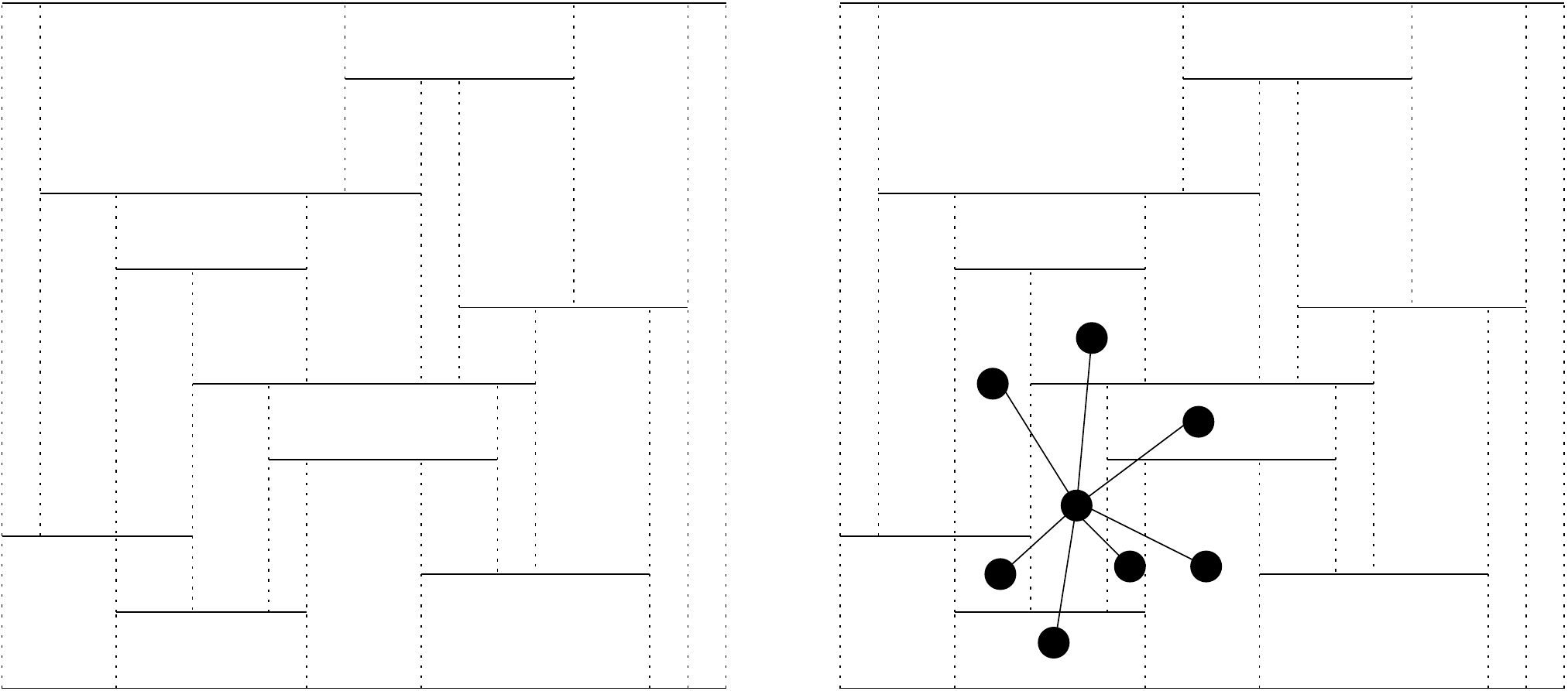}
\caption{Vertical decomposition of the plane induced by a collection of line segments, and a part of the dual graph.}
\label{fig:verticaldecomp}
\end{figure}
Using the planar separator theorem, for any parameter $\lambda \ge 1$, there
is a collection of $O(n/\lambda)$ cells such that the removal of this collection
of cells this graph can be decomposed into connected components
of size $O(\lambda^2)$ each (this approach is used by Bose \etalcite{Boseetal09} and
Chan and \Patrascu{} \cite{CP09} for example). As in \cite{Boseetal09} we use a
two-level decomposition, first decomposing the vertical decomposition of $\Inf(S)$ using
$\lambda = (\log N)^2$, and then decomposing the connected components themselves using
$\lambda' = (\log \log N)^2$.  The key difference to Lemma~\ref{lem:2sfinal} is that
the boundaries of the cells can be relatively compactly described. For instance,
for each separator cell in the top-level decomposition, we can specify the 
points in $S$ that define its four sides using $O(\log N)$ bits, and still use only $o(N)$ bits.
In the second-level decomposition, this information is stored in $O(\log \log N)$ bits per
separator cell (again $o(N)$ bits overall), 
as the relevant point will either belong to the same top-level connected
component of size $O((\log N)^4)$, or else the relevant information will be stored
in one of the top-level separator cells that form its boundary.  The leading-order
term comes from storing the $N \log 5 + o(N)$ bits of priority information needed to 
answer queries within the connected components.  As in \cite{Boseetal09},
we also permute the points in a manner aligned with the decomposition, 
which allows us to reconstruct
the appropriate part of the planar point location rapidly.  
The planar point location
in the top level is performed using Chan's data structure 
\cite{DBLP:conf/soda/Chan11} taking $O(\log \log N)$ time.
Note that a third level of decomposition, this time into connected components of
size $O((\log \log \log N)^3)$ is needed to achieve "decompression" of the
relevant parts of the point location structure in $O(\log \log N)$ time.
\qed{}
\end{proof}

\section{Conclusions}

We have introduced a new approach to producing space-efficient 
data structures for orthogonal range queries.  The main idea has been
to partition the problem into smaller sub-problems, which are stored
in a ``compressed'' format that are then ``decompressed'' on demand.
We applied this idea to give the first linear-space data structure for
2D range maxima that improves upon Chazelle's 1985 linear-space data structure.
\bibliographystyle{abbrv}
\bibliography{refs}
\appendix
\end{document}